\documentclass[pra,onecolumn]{revtex4}

\usepackage{amsmath, amssymb, amsthm, amsfonts, graphicx}
\usepackage[colorlinks]{hyperref}
\usepackage{complexity}
\usepackage{thm-restate}
\newcommand{\ket}[1]{| #1\rangle}        
\newcommand{\bra}[1]{\langle #1|}        
\newcommand{\braket}[2]{\langle #1 | #2 \rangle} 
\newcommand{\ketbra}[2]{| #1 \rangle\!\langle #2 |} 


\newtheorem{theorem}{Theorem}
\newtheorem{definition}[theorem]{Definition}
\newtheorem{corollary}[theorem]{Corollary}
\newtheorem{lemma}[theorem]{Lemma}

\newtheorem{threat}{Threat Model}

\newcommand{\eq}[1]{Eq.~\hyperref[eq:#1]{(\ref*{eq:#1})}}
\renewcommand{\sec}[1]{\hyperref[sec:#1]{Section~\ref*{sec:#1}}}
\newcommand{\app}[1]{\hyperref[app:#1]{Appendix~\ref*{app:#1}}}
\newcommand{\tab}[1]{\hyperref[tab:#1]{Table~\ref*{tab:#1}}}
\newcommand{\fig}[1]{\hyperref[fig:#1]{Figure~\ref*{fig:#1}}}
\newcommand{\figa}[2]{\hyperref[fig:#1]{Figure~\ref*{fig:#1}#2}}
\newcommand{\figx}[2]{\hyperref[fig:#1]{Figure~\ref*{fig:#1}(#2)}}
\newcommand{\thm}[1]{\hyperref[thm:#1]{Theorem~\ref*{thm:#1}}}
\newcommand{\lem}[1]{\hyperref[lem:#1]{Lemma~\ref*{lem:#1}}}
\newcommand{\cor}[1]{\hyperref[cor:#1]{Corollary~\ref*{cor:#1}}}
\newcommand{\defn}[1]{\hyperref[def:#1]{Definition~\ref*{def:#1}}}
\newcommand{\alg}[1]{\hyperref[alg:#1]{Algorithm~\ref*{alg:#1}}}
\newcommand{\prob}[1]{\hyperref[prob:#1]{Problem~\ref*{prob:#1}}}
\newcommand{\threatM}[1]{\hyperref[threat:#1]{Threat Model \ref*{threat:#1}}}

%
%
%


\usepackage[color]{xy}
\UseCrayolaColors
\xyoption{matrix}
\xyoption{frame}
\xyoption{arrow}
\xyoption{arc}

\usepackage{ifpdf}
\ifpdf
\else
\PackageWarningNoLine{Qcircuit}{Qcircuit is loading in Postscript mode.  The Xy-pic options ps and dvips will be loaded.  If you wish to use other Postscript drivers for Xy-pic, you must modify the code in Qcircuit.tex}
\xyoption{ps}
\xyoption{dvips}
\fi

\entrymodifiers={!C\entrybox}

\newcommand{\qw}[1][-1]{\ar @{-} [0,#1]}

\newcommand{\qwx}[1][-1]{\ar @{-} [#1,0]}


\newcommand{\gate}[1]{*+<.6em>{#1} \POS ="i","i"+UR;"i"+UL **\dir{-};"i"+DL **\dir{-};"i"+DR **\dir{-};"i"+UR **\dir{-},"i" \qw}







\newcommand{\control}{*!<0em,.025em>-=-<.2em>{\bullet}}

\newcommand{\ctrl}[1]{\control \qwx[#1] \qw}

\newcommand{\targ}{*+<.02em,.02em>{\xy ="i","i"-<.39em,0em>;"i"+<.39em,0em> **\dir{-}, "i"-<0em,.39em>;"i"+<0em,.39em> **\dir{-},"i"*\xycircle<.4em>{} \endxy} \qw}

\newcommand{\multigate}[2]{*+<1em,.9em>{\hphantom{#2}} \POS [0,0]="i",[0,0].[#1,0]="e",!C *{#2},"e"+UR;"e"+UL **\dir{-};"e"+DL **\dir{-};"e"+DR **\dir{-};"e"+UR **\dir{-},"i" \qw}
\newcommand{\ghost}[1]{*+<1em,.9em>{\hphantom{#1}} \qw}



\newcommand{\lstick}[1]{*!R!<.5em,0em>=<0em>{#1}}


\newcommand{\Qcircuit}{\xymatrix @*=<0em>}






\begin{document}
\title{Hardening Quantum Machine Learning Against Adversaries}
\author{Nathan Wiebe}
\affiliation{Quantum Architectures and Computing Group, Microsoft Research, Redmond, WA 98052}
\author{Ram Shankar Siva Kumar}
\affiliation{Microsoft, Redmond, WA 98052}
\begin{abstract}
Security for machine learning has begun to become a serious issue for present day applications.  An important question remaining is whether emerging quantum technologies will help or hinder the security of machine learning.  Here we discuss a number of ways that quantum information can be used to help make quantum classifiers more secure or private.  In particular, we demonstrate a form of robust principal component analysis that, under some circumstances, can provide an exponential speedup relative to robust methods used at present.  To demonstrate this approach we introduce a linear combinations of unitaries Hamiltonian simulation method that we show functions when given an imprecise Hamiltonian oracle, which may be of independent interest.  We also introduce a new quantum approach for bagging and boosting that can use quantum superposition over the classifiers or splits of the training set to aggragate over many more models than would be possible classically.  Finally, we provide a private form of $k$--means clustering that can be used to prevent an all powerful adversary from learning more than a small fraction of a bit from any user.  These examples show the role that quantum technologies can play in the security of ML and vice versa.  This illustrates that quantum computing can provide useful advantages to machine learning apart from speedups.  
\end{abstract}
\date{\today}
\maketitle

\section{Introduction}
There is a huge uptick in the use of machine learning for mission critical industrial applications – from self driving cars~\cite{pomerleau1989alvinn}  to detecting malignant tumors~\cite{cruz2006applications} to detecting fraudulent credit card transactions~\cite{xiao2015feature}: machine learning is crucial in decision making. However, classical Machine learning algorithms such as Principal component analysis (used heavily in anomaly detection scenarios), clustering (used in unsupervised learning), support vector machines (used in classification scenarios), as commonly implemented, are extremely vulnerable to changes to the input data, features and the final model parameters/hyper-parameters that have been learned. Essentially, an attacker can exploit any of the above vulnerabilities and subvert ML algorithms. As a result, an attacker has a variety of goals that can be achieved: increasing the false negative rate and thus become undetected (for instance, in the case of spam, junk emails are classified as normal)~\cite{wittel2004attacking} or by increasing the false positive rate (for instance, in the case of intrusion detection systems, attacks get drowned in sea of “noise” which causes the system to shift the baseline activity)~\cite{lowd2005good,stern2004linguistics,alfeld2016data,stern2004linguistics}, steal the underlying model itself exploiting membership queries~\cite{tramer2016stealing} and even recover the underlying training data breaching privacy contracts~\cite{tramer2016stealing}. 

In machine learning literature, this study is referred to as adversarial machine learning and has largely been applied to security sensitive areas such as intrusion detection~\cite{biggio2010multiple,biggio2013evasion} and spam filtering~\cite{wittel2004attacking}.  To combat the problem of adversarial machine learning, solutions have been studied from different vantage points: from a statistical stand point, adversaries are treated as noise and thus the models are hardened using robust statistics to overcome the malicious outliers~\cite{hampel2011robust}. Adversarial training is a  promising trend in this space, is wherein defenders train the system with adversarial examples from the start so that the model is acclimatized to such threats. From a security standpoint, there has been substantial work surrounding threat modeling machine learning systems~\cite{ringberg2007sensitivity,papernot2016limitations}  and frameworks for anticipating different kinds of attacks~\cite{papernot2016cleverhans}. 

Quantum computing has experienced a similar surge of interest of late and this had led to a synergistic relationship wherein quantum computers have been found to have profound implications for machine learning~\cite{biamonte2017quantum,aimeur2006machine,lloyd2013quantum,pudenz2013quantum,rebentrost2014quantum,WKS15,wiebe2016quantum,rebentrost2016quantum,
kerenidis2016quantum,amin2016quantum,schuld2017quantum,gilyen2017optimizing,kieferova2016tomography} and machine learning has been shown to be invaluable for characterizing, controlling and error correcting such quantum computers~\cite{assion1998control,granade2012robust,torlai2017neural}.  However, as of late the question of what quantum computers can do to protect machine learning from adversaries is relatively underdeveloped.  This is perhaps surprising given the impact that applications of quantum computing to security have a long history.

The typical nexus of quantum computing and security is studied from the light of quantum cryptography and its ramifications in key exchange and management. This paper takes a different tack.  We explores this intersection by asking questions the use of quantum subroutines to analyze patterns in data, and explores the question and security assurance - specifically, asking the question “Are quantum machine learning systems secure”?  

Our aim is to address this question by investigating the role that quantum computers may have in protecting machine learning models from attackers.  In particular, we introduce a robust form of quantum principal component analysis that provides exponential speedups while making the learning protocol much less sensitive to noise introduced by an adversary into the training set.  We then discuss bagging and boosting, which are popular methods for making models harder to extract by adversaries and also serve to make better classifiers by combining the predictions of weak classifiers.  Finally, we discuss how to use ideas originally introduced to secure quantum money to boost the privacy of $k$-means by obfuscating the private data of participants from even an allpowerful adversary.  From this we conclude that quantum methods can be used to have an impact on security and that while we have shown defences against some classes of attacks, more work is needed before we can claim to have a fully developed understanding of the breadth and width of adversarial quantum machine learning.

\section{Adversarial Quantum Machine Learning}
As machine learning becomes more ubiquitous in applications so too do attacks on the learning algorithms that they are based on.  The key assumption usually made in machine learning is that the training data is independent of the model and the training process.  For tasks such as classification of images from imagenet such assumptions are reasonable because the user has complete control of the data.  For other applications, such as developing spam filters or intrusion detection this may not be reasonable because in such cases training data is provided in real time to the classifier and the agents that provide the information are likely to be aware of the fact that their actions are being used to inform a model. 

Perhaps one of the most notable examples of this is the Tay chat bot incident.  Tay was a chat bot designed to learn from users that it could freely interact with in a public chat room.  Since the bot was programmed to learn from human interactions it could be subjected to what is known as a wolf pack attack, wherein a group of users in the chat room collaborated to purposefully change Tay's speech patterns to become increasingly offensive.  After $16$ hours the bot was pulled from the chat room.  This incident underscores the need to build models that can learn while at the same time resist malfeasant interactions on the part of a small fraction of users of a system.

A major aim of adversarial machine learning is to characterize and address such problems by making classifiers more robust to such attacks or by giving better tools for identifying when such an attack is taking place.  There are several broad classes of attacks that can be considered, but perhaps the two most significant in the taxonomy in attacks against classifiers are exploratory attacks and causative attacks.  Exploratory attacks are not designed to explicitly impact the classifier but instead are intended to give an adversary information about the classifier.  Such attacks work by the adversary feeding test examples to the classifier and then inferring information about it from the classes (or meta data) returned.  The simplest such attack is known as an evasion attack, which aims to find test vectors that when fed into a classifier get misclassified in a way that benefits the adversary (such as spam being misclassified as ordinary email).
More sophisticated exploratory attacks may even try identify to the model used to assign the class labels or in extreme cases may even try identify the training set for the classifier.  Such attacks can be deployed as a precursor to causitive attacks or can simply be used to violate privacy assumptions that the users that supplied data to train the classifier may have had.

Causitive attacks are more akin to the Tay example.  The goal of such attacks is to change the model by providing it with training examples.  Again there is a broad taxonomy of causitive attacks but one of the most ubiquitous attacks is the poisoning attack.  A poisoning attack seeks to control a classifier by introducing malicious training data into the training set so that the adversary can force an incorrect classification for a subset of test vectors.  One particularly pernicious type of attack is the boiling frog attack, wherein the amount of malicious data introduced in the training set is slowly increased over time.  This makes it much harder to identify whether the model has been compromised by users because the impact of the attack, although substantial, is hard to notice on the timescale of days.

While a laundry list of attacks are known against machine learning systems the defences that have been developed thus far are somewhat limited.  A commonly used tactic is to replace classifiers with robust versions of the same classifier.  For example, consider $k$--means classification.  Such classifiers are not necessarily robust because the intra-cluster variance is used to decide the quality of a clustering.  This means that an adversary can provide a small number of training vectors with large norm that can radically impact the cluster assignment.  On the other hand, if robust statistics such as the median, are used then the impact that the poisoned data can have on the cluster centroids is minimal.  Similarly, bagging can also be used to address these problems by replacing a singular classifier that is trained on the entire training set with an ensemble of classifiers that are trained on subsets the training set.  By aggragating over the class labels returned by this process we can similarly limit the impact of an adversary that controls a small fraction of the data set.

Given such examples, two natural questions arise: 1) ``How should we model threats in a quantum domain?'' and 2)  ``Can quantum computing be used to make machine learning protocols more secure?''.  In order to address the former problem it is important to consider the access model used for the training data for the quantum ML algorithm.  Perhaps the simplest model to consider is a QRAM wherein the training data is accessed using a binary access tree that allows, in depth $O(n)$ but size $O(2^n)$, the training data to be accessed as bit strings.  For example, up to isometries, $U_{\rm QRAM} \ket{j} \ket{0}= \ket{j}\ket{[v_j]}$ where $[v_j]$ is a qubit string used to encode the $j^{\rm th}$ training vector.  Similarly such an oracle can be used to implement a second type of oracle which outputs the vector as a quantum state vector (we address the issue of non-unit norm training examples below): $U \ket{j} \ket{0} = \ket{j}\ket{v_j}$.  Alternatively, one can consider a density matrix query for use in an LMR algorithm for density matrix exponentiation~\cite{lloyd2014quantum,kimmel2017hamiltonian} wherein a query to the training data takes the form $B\ket{0} \mapsto \rho$ where $\rho$ is a density operator that is equivalent the distribution over training vectors.  For example, if we had a uniform distribution over training vectors in a training set of $N$ training vectors then $\rho = \sum_{j=1}^N \ketbra{v_j}{v_j} / ({\rm Tr \sum_{j=1}^N \ketbra{v_j}{v_j}})$.

With such quantum access models defined we can now consider what it means to perform a quantum poisoning attack.  A poisoning attack involves an adversary purposefully altering a portion of the training data in the classical case so in the quantum case it is natural to define a poisoning attack similarly.  In this spirit, a quantum poisoning attack takes some fraction of the training vectors, $\eta$, and replaces them with training vectors of their choosing.  That is to say, if without loss of generality an adversary replaces the first $\eta N$ training vectors in the data set then the new oracle that the algorithm is provided is of the form
\begin{equation}
U_{\rm QRAM} \ket{j} \ket{0} = \begin{cases} \ket{j}\ket{[v_j]} &{\mbox{, if } j>\eta N}\\ \ket{j}\ket{[\phi_j]} &\mbox{, otherwise}  \end{cases},
\end{equation}
where $\ket{[\phi_j]}$ is a bit string of the adversary's choice.  Such attacks are reasonable if a QRAM is stored on an untrusted machine that the adversary has partial access to, or alternatively if the data provided to the QRAM is partially under the adversary's control.  The case where the queries provide training vectors, rather than qubit vectors, is exactly the same.  The case of a poisoning attack on density matrix input is much more subtle, however in such a case we define the poisoning attack to take the form $B\ket{0} = \rho':~|{\rm Tr}(\rho-\rho')|\le \eta$, however other definitions are possible.

Such an example is a quantum causitive attack since it seeks to cause a change in the quantum classifier.  An example of a quantum exploratory attack could be an attack that tries to identify the training data or the model used in the classification process.  For example, consider a quantum nearest neighbor classifier.  Such classifiers search through a large database of training examples in order to find the closest training example to a test vector that is input.  By repeatedly querying the classifier using quantum training examples adversarially chosen it is possible to find the decision boundaries between the classes and from this even the raw training data can, to some extent, be extracted.  Alternatively one could consider cases where an adversary has access to the network on which a quantum learning protocol is being carried out and seeks to learn compromising information about data that is being provided for the training process, which may be anonymized at a later stage in the algorithm.

Quantum can help, to some extent, both of these issues.  Poisoning attacks can be addressed by building robust classifiers.  That is, classifiers that are insensitive to changes in individual vectors.  We show that quantum technologies can help with this by illustrating how quantum principal component analysis and quantum bootstrap aggragation can be used to make the decisions more robust to poisoning attacks by proposing new variants of these algorithms that are ammenable to fast quantum algorithms for medians to be inserted in the place of the expectation values typically used in such cases.  We also illustrate how ideas from quantum communication can be used to thwart exploratory attacks by proposing a private version of $k$--means clustering that allows the protocol to be performed without (substantially) compromising the private data of any participants and also without requiring that the individual running the experiment be authenticated.

\section{Robust Quantum PCA}
The idea behind principal component analysis is simple.  Imagine you have a training set composed of a large number of vectors that live in a high dimensional space.  However, often even high dimensional data can have an effective low-dimensional approximation.  Finding such representations in general is a fine art, but quantum principal component analysis provides a prescriptive way of doing this.  The idea of principal component analysis is to examine the eigenvectors of the covariance matrix for the data set.  These eigenvectors are the principal components, which give the directions of greatest and least variance, and their eigenvalues give the magnitude of the variation.  By transforming the feature space to this eigenbasis and then projecting out the components with small eigenvalue one can reduce the dimensionality of the feature space.

One common use for this, outside of feature compression, is to detect anomalies in training sets for supervised machine learning algorithms.  Imagine that you are the administrator of a network and you wish to determine whether your network has been compromised by an attacker.  One way to detect an intrusion is to look at the packets moving through the network and use principal component analysis to determine whether the traffic patterns are anomalous based on previous logs.  The detection of an anomalous result can be performed automatically by projecting the traffic patterns onto the principal components of the traffic data.  If the data is consistent with this data it should have high overlap with the eigenvectors with large eigenvalue, and if it is not then it should have high overlap with the small eigenvectors.

While this technique can be used in practice it can have a fatal flaw.  The flaw is that an adversary can inject spikes of usage in directions that align with particular principal components of the classifier.  This allows them to increase the variance of the traffic along principal components that can be used to detect their subsequent nefarious actions.  
To see this, let us restrict our attention to poisoning attacks wherein an adversary controls some constant fraction of the training vectors.

Robust statistics can be used to help mitigate such attacks.  The way that it can be used to help make PCA secure is by replacing the mean with a statistic like the median.  Because the median is insensitive to rare but intense events, an adversary needs to control much more of the traffic flowing through the system in order to fool a classifier built to detect them.  For this reason, switching to robust PCA is a widely used tactic to making principal component analysis more secure.
To formalize this, let us define what the robust PCA matrix is first.
\begin{definition}
Let $x_j: j=1:N$ be a set of real vectors and let $e_k:k=1,\ldots,N_v$ be basis vectors in the computational basis then we define $\mathbf{M}_{k,\ell} := {\rm median}([e_k^Tx_j -{\rm median}(e_k^T x_j)][e_\ell^Tx_j -{\rm median}(e_\ell^T x_j)])$.\label{def:N_v}
\end{definition}
This is very similar to the PCA matrix when we can express as ${\rm mean}([e_k^Tx_j -{\rm mean}(e_k^T x_j)][e_\ell^Tx_j -{\rm mean}(e_\ell^T x_j)])$.  Because of the similarity that it has with the PCA matrix, switching from a standard PCA algorithm to a robust PCA algorithm in classical classifiers is a straight forward substitution.

In quantum algorithms, building such robustness into the classifier is anything but easy.  The challenge we face in doing so is that the standard approach to this uses the fact that quantum mechanical state operators can be viewed as a covariance matrix for a data set~\cite{lloyd2014quantum,kimmel2017hamiltonian}.  By using this similarity in conjunction with the quantum phase estimation algorithm the method provides an expedient way to project onto the eigenvectors of the density operator and in turn the principal components.  However, the robust PCA matrix given in~\ref{def:N_v} does not have such a natural quantum analogue.  Moreover, the fact that quantum mechanics is inherently linear makes it more challenging to apply an operation like the median which is not a linear function of its inputs.  This means that if we want to use quantum PCA in an environment where an adversary controls a part of the data set, we will need to rethink our approach to quantum principal component analysis.

The challenges faced by directly applying density matrix exponentiation also suggests that we should consider quantum PCA within a different cost model than that usually used in quantum principal component analysis.  We wish to examine whether data obtained from a given source, which we model as a black box oracle,
is typical of a well understood training set or not.  We are not interested in the principal components of the vector, but rather are interested in its projection onto the low variance subspace of the training data.  We also assume that the training vectors are accessed in an oracular fashion and that nothing apriori is known about them.

\begin{definition}
Let $U$ be a self-adjoint unitary operator acting on a finite dimensional Hilbert space such that $U\ket{j}\ket{0} = \ket{j}\ket{v_j}$ where $v_j$ is the $j^{\rm th}$ training vector.
\end{definition}

\begin{lemma}
Let $x_j\in \mathbb{C}^N$ for integer $j$ in $[1,N_v]$ obey $\|x_j\| \le R\in \mathbb{R}$.   There exists a set of unit vectors in a Hilbert space of dimension $N(2N_v+1)$ such that for any $\ket{x_j},|{x_k^\dagger}\rangle$ within this set $\braket{x_j^\dagger}{x_k} = \langle x_j,x_k\rangle/R^2$.\label{lem:isometry}
\end{lemma}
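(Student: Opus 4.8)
The plan is to write down the ambient Hilbert space and the vectors explicitly and then verify the two requirements --- unit norm and the overlap identity --- by a short computation each. Take $\cH := \mathbb{C}^N\otimes\mathbb{C}^{2N_v+1}$, so that $\dim\cH = N(2N_v+1)$, and fix an orthonormal ``flag'' basis of the second tensor factor labelled $\{\ket{0}\}\cup\{\ket{j,+}\}_{j=1}^{N_v}\cup\{\ket{j,-}\}_{j=1}^{N_v}$. The first factor will hold the data amplitudes; the flag records a per-vector ``garbage'' label that is never shared between two distinct members of the set. For each $k\in[1,N_v]$ set
\begin{equation}
\ket{x_k}:=\frac1R\left(\sum_{i=1}^N (x_k)_i\,\ket{i}\otimes\ket{0}+\sqrt{R^2-\norm{x_k}^2}\;\ket{1}\otimes\ket{k,+}\right),
\end{equation}
and for each $j\in[1,N_v]$ define its daggered partner by conjugating the amplitudes and switching to the $-$ flag,
\begin{equation}
\ket{x_j^\dagger}:=\frac1R\left(\sum_{i=1}^N \overline{(x_j)_i}\,\ket{i}\otimes\ket{0}+\sqrt{R^2-\norm{x_j}^2}\;\ket{1}\otimes\ket{j,-}\right).
\end{equation}
The hypothesis $\norm{x_j}\le R$ is precisely what makes each square root real, so all of these are genuine vectors of $\cH$, and the set in question is $\{\ket{x_k}\}_k\cup\{\ket{x_j^\dagger}\}_j$.

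First I would verify normalization: since $\ket{0}$ is orthogonal to $\ket{k,+}$ in the flag register, $\norm{\ket{x_k}}^2=R^{-2}\bigl(\norm{x_k}^2+(R^2-\norm{x_k}^2)\bigr)=1$, and the identical computation gives $\norm{\ket{x_j^\dagger}}^2=1$. For the overlap, the bra of a daggered vector is $\bra{x_j^\dagger}=R^{-1}\bigl(\sum_i (x_j)_i\,\bra{i}\otimes\bra{0}+\sqrt{R^2-\norm{x_j}^2}\,\bra{1}\otimes\bra{j,-}\bigr)$ --- taking the adjoint has undone the complex conjugation on the amplitudes. Pairing with $\ket{x_k}$, the garbage term of $\bra{x_j^\dagger}$ lies in the flag sector $\ket{j,-}$, which is orthogonal to $\ket{0}$ and to every $\ket{k,+}$, so it drops out; what survives is $\braket{x_j^\dagger}{x_k}=R^{-2}\sum_{i=1}^N (x_j)_i(x_k)_i=\langle x_j,x_k\rangle/R^2$, which is the claim.

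I do not expect a genuine obstacle; the only delicate points are conventional. One must insert $\overline{(x_j)_i}$ rather than $(x_j)_i$ into $\ket{x_j^\dagger}$ so that the induced bra carries $(x_j)_i$ and the overlap reproduces the bilinear pairing $\langle x_j,x_k\rangle$ in the transpose convention used in \defn{N_v}, rather than a conjugate-linear one; and one must allocate genuinely disjoint flag labels to all $2N_v$ garbage directions, which is what forces the dimension up to $N(2N_v+1)$ --- reserving a full copy of $\mathbb{C}^N$ per flag is the convenient choice when one later wants an oracle acting as $U\ket{j}\ket{0}=\ket{j}\ket{v_j}$, even though only a single garbage dimension per flag is actually used. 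The ``plain'' kets on their own do not suffice: $\braket{x_j}{x_k}$ returns the sesquilinear sum $\sum_i\overline{(x_j)_i}(x_k)_i$ together with a spurious $\delta_{jk}\bigl(R^2-\norm{x_j}^2\bigr)$ coming from the shared $+$ flag, and the amplitude conjugation and the separate $+/-$ flags are exactly what remove those two defects. Degenerate cases such as $\norm{x_j}=R$ (an empty garbage term) or $x_j=x_k$ require no special handling, so the verification is uniform over the whole set.
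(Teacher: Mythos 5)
Your construction is essentially the same as the paper's: both embed $x_k$ into a $\ket 0$--flagged sector, pad to unit norm with a $\sqrt{R^2-\|x_k\|^2}$ garbage term sitting in a $k$--dependent flag direction, and conjugate the amplitudes of the daggered partner while assigning it a disjoint flag so that only the bilinear sum $\sum_i (x_j)_i(x_k)_i$ survives the overlap. The only differences are cosmetic --- you park the garbage term on a fixed $\ket 1$ in the data register where the paper uses $\ket{x_j/\|x_j\|}$, which actually makes your version slightly cleaner when $\|x_j\|=0$ --- plus the paper separately dispatches the degenerate case $R=0$, which your global $1/R$ normalization does not cover.
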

\begin{proof}
The proof is constructive.  Let us assume that $R>0$.  First let us define for any $x_j$,
\begin{equation}
\ket{x_j/\|x_j\|}:= x_j/\|x_j\|.
\end{equation}
We then encode
\begin{align}
&x_j \mapsto \ket{x_j} := \ket{x_j/\|x_j\|} \left(\frac{\|x_j\|}{R}\ket{0} + \sqrt{1-\frac{\|x_j\|^2}{R^2}}\ket{j} \right).\nonumber\\
&x^\dagger_j \mapsto |x_j^\dagger \rangle := \ket{x_j^\dagger/\|x_j\|} \left(\frac{\|x_j\|}{R}\ket{0} + \sqrt{1-\frac{\|x_j\|^2}{R^2}}\ket{j+N_v} \right).
\end{align}
If $R=0$ then
\begin{align}
&x_j \mapsto \ket{0} \ket{j},\nonumber\\
&x_j^\dagger \mapsto \ket{0} \ket{j+N_v}.
\end{align}
It is then easy to verify that $\braket{x_j^\dagger}{x_k} = \langle x_j,x_k\rangle/R^2$ for all pairs of $j$ and $k$, even if $R=0$.  The  resultant vectors are defined on  a tensor product of two vector spaces.  The first is of dimension $N$ and the second is or dimension at least $2N_v+1$.  Since the dimension of a tensor product of Hilbert spaces is the product of the subsystem dimensions, the dimension of the overall Hilbert space is $N(2N_v+1)$ as claimed.
\end{proof}
For the above reason, we can assume without loss of generality that all training vectors are unit vectors.  Also for simplicity we will assume that $v_j\in \mathbb{R}$, but the complex case follows similarly and only differs in that both the real and imaginary components of the inner products need to be computed.

\begin{threat}
Assume that the user is provided an oracle, $U$, that allows the user to access real-valued vectors of dimension as quantum states yielded by an oracle of the form in \lem{isometry}.  This oracle could represent an efficient quantum subroutine or it could represent a QRAM.  The adversary is assumed to have control with probability $\alpha <1/2$ the vector $\ket{x_j}$ yielded by the algorithm for a given coherent query subject to the constraint that $\|x_j\|\le R$.  The aim of the adversary is to affect, through these poisoned samples, the principal components yielded from the data set yielded by $U$. \label{threat:1}
\end{threat}

Within this threat model, the maximum impact that the adversary could have on any of the expectation values that form the principal component matrix $O(\alpha R)$.  Thus any given component of the principal component matrix can be controlled by the adversary provided that the data expectation $\alpha \gg \mu/R$ where $\mu$ is an upper bound on the componentwise means of the data set.  In other words, if $\alpha$ is sufficiently large and the maximum vector length allowed within the algorithm is also large then the PCA matrix can be compromised.  This vulnerability comes from the fact that the mean can be dominated by inserting a small number of vectors with large norm.

Switching to the median can dramatically help with this problem as shown in the following lemma, whose proof is trivial but we present for completeness.
\begin{lemma}
Let $P(x)$ be a probability distribution on $\mathbb{R}$ with invertable cummulative distribution function ${\rm CDF}$.  Further, let ${\rm CDF}^{-1}$ be its inverse cummulative distribution function and let ${\rm CDF}^{-1}$ be Lipshitz continuous with Lipshitz constant $L$ on $[0,1]$.  Assume that an adversary replaces $P(x)$ with a distribution $Q(x)$ that is within total variational distance $\alpha<1/2$ from $P(x)$.  Under these assumptions $|{\rm median}(P(x)) - {\rm median}(Q(x))|\le \alpha L$.\label{lem:perturb}
\end{lemma}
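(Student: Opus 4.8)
The plan is to reduce the statement to a bound on how far the $Q$-median can push the value of $P$'s cumulative distribution function away from $1/2$, and then transport that bound back through ${\rm CDF}^{-1}$ using Lipschitz continuity. Write $F_P$ and $F_Q$ for the cumulative distribution functions of $P$ and $Q$, so that $F_P={\rm CDF}$. First I would record the standard fact that total variational distance controls the pointwise discrepancy of cumulative distribution functions: applying the definition of total variational distance to the measurable set $(-\infty,x]$ gives $|F_P(x)-F_Q(x)|\le\alpha$ for every $x\in\mathbb{R}$. Since $F_P$ is invertible it is continuous and strictly increasing on the relevant interval, so the median of $P$ is exactly $m_P:={\rm CDF}^{-1}(1/2)$.

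Next I would locate $m_Q:={\rm median}(Q)$. By the definition of a median, for every $\epsilon>0$ we have $F_Q(m_Q+\epsilon)\ge 1/2$ and $F_Q(m_Q-\epsilon)\le 1/2$; combining these with the CDF bound above yields $F_P(m_Q+\epsilon)\ge 1/2-\alpha$ and $F_P(m_Q-\epsilon)\le 1/2+\alpha$. Letting $\epsilon\to 0^+$ and using continuity of $F_P$ gives $1/2-\alpha\le F_P(m_Q)\le 1/2+\alpha$. This is where the hypothesis $\alpha<1/2$ is used: it guarantees $F_P(m_Q)\in[1/2-\alpha,1/2+\alpha]\subseteq[0,1]$, i.e.\ that $F_P(m_Q)$ lies in the interval on which ${\rm CDF}^{-1}$ is defined and Lipschitz. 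Then, since invertibility gives ${\rm CDF}^{-1}(F_P(m_Q))=m_Q$,
\begin{equation}
|m_P-m_Q| = \left|{\rm CDF}^{-1}(1/2)-{\rm CDF}^{-1}(F_P(m_Q))\right| \le L\left|1/2-F_P(m_Q)\right| \le \alpha L,
\end{equation}
which is the claim.

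The only place requiring care -- and the one I would flag as the main (mild) obstacle -- is handling the median of $Q$ cleanly, since $Q$ need not have an invertible, or even continuous, CDF, so $m_Q$ is only a point with $F_Q(m_Q^-)\le 1/2\le F_Q(m_Q)$ rather than an exact preimage of $1/2$ under $F_Q$. The one-sided-limit argument above sidesteps this, and it also shows that if the $Q$-median is non-unique the same bound holds for every choice, since every such point satisfies the same pair of inequalities. Everything else is a direct substitution, consistent with the remark in the statement that the proof is essentially trivial.
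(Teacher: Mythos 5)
Your proof is correct and follows essentially the same route as the paper's: bound the discrepancy of the CDFs at the $Q$-median by the total variational distance, then transport that bound through ${\rm CDF}^{-1}$ using its Lipschitz constant. The one small difference is that you handle the median of $Q$ via one-sided limits of $F_Q$, which is a touch more careful than the paper's implicit assumption that $\int_{-\infty}^{y_1}Q=1/2$ has an exact solution, but the core argument is identical.
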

\begin{proof}
Let $\int_{-\infty}^{y_0} P(x) \mathrm{d}x =1/2$ and $\int_{-\infty}^{y_0} Q(x) \mathrm{d}x =1/2$.  We then have from the triangle inequality that
\begin{equation}
\int_{-\infty}^{y_1} Q(x) \mathrm{d}x \le \int_{-\infty}^{y_1} P(x) \mathrm{d}x + \int_{-\infty}^{\infty} |P(x)-Q(x)| \mathrm{d}x= {\rm CDF}(y_1) + \alpha.
\end{equation}
Thus we have that $\frac{1}{2} -\alpha \le {\rm CDF}(y_1)$.  The inverse cummulative distribution function is a monotonically increasing function on $[0,1]$ which implies along with our assumption of Lipshitz continuity with constant $L$
\begin{align}
y_0-L\alpha = {\rm CDF}^{-1}\left(\frac{1}{2}\right) - L\alpha &\le {\rm CDF}^{-1} \left(\frac{1}{2} - \alpha\right) \le {\rm CDF}^{-1}({\rm CDF}(y_1)) = y_1.
\end{align}
Using the exact same argument but applying the reverse triangle inequality in the place of the triangle inequality gives $y_1 \le y_0 +\alpha L$ which completes the proof.
\end{proof}

Thus under reasonable continuity assumptions on the unperturbed probability distribution~\lem{perturb} shows that the maximum impact that an adversary can have on the median of a distribution is negligible.  In contrast, the mean does not enjoy such stability and as such using robust statistics like the median can help limit the impact of adversarially placed training vectors within a QRAM or alternatively make the classes less sensitive to outliers that might ordinarily be present within the training data.

\begin{restatable}{theorem}{finalPCA}
Let the minimum eigenvalue gap of $d$--sparse Hermitian matrix $\mathbf{M}$ within the support of the input state be $\lambda$, let each training vector ${x}_j$ be a unit vector provided by the oracle defined in~\lem{isometry} and assume that the underlying data distribution for the components of the training data has a continuous inverse cummulative distribution with function with constant Lipshitz constant $L>0$. We have that 
\begin{enumerate}
\item The number of queries needed to the oracle given in~\lem{isometry} needed to sample from a distribution $P$ over $O(\lambda)$--approximate eigenvalues of $\mathbf{M}$ such that for all training vectors $x$ $|P(E'_n) - |\braket{x}{E_n}|^2| \le \Lambda\le 1$ is in $O([d^4/\Lambda^3 \lambda^3]\log^2(d^2/(\lambda \Lambda)))$.
\item Assuming an adversary replaces the data distribution used for the PCA task by one within variational distance $\alpha$ from the original such that $\alpha L \le 1$ and $\alpha < 1/2$.  If $\mathbf{M'}$ be is the poisoned robust PCA matrix then $\|\mathbf{M} -\mathbf{M'}\|_2 \le 5\alpha L(d+2)$.
\end{enumerate}

\label{thm:finalPCA}
\end{restatable}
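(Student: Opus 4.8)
The plan is to prove the two assertions independently: part~2 is a short robustness estimate built on \lem{perturb}, while part~1 routes an \emph{imprecise} Hamiltonian oracle for $\mathbf{M}$ through a linear-combination-of-unitaries (LCU) simulation and then through phase estimation.

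For part~2 I would work entrywise. First, \lem{perturb} applied to the one-dimensional marginal of each coordinate $e_k^T x$ shows that every inner median $m_k := {\rm median}(e_k^T x)$ moves by at most $\alpha L$ when $P$ is replaced by the poisoned distribution $Q$, since those marginals are still within total variation $\alpha$. Writing $f_P(x) := (e_k^T x - m_k^P)(e_\ell^T x - m_\ell^P)$ for the scalar variable whose median equals $\mathbf{M}_{k,\ell}$, and $f_Q$ for the analogue built from the $Q$-medians, I would split
\[
\big|\mathbf{M}_{k,\ell} - \mathbf{M}'_{k,\ell}\big| \le \big|{\rm median}_P(f_P) - {\rm median}_Q(f_P)\big| + \big|{\rm median}_Q(f_P) - {\rm median}_Q(f_Q)\big| .
\]
The first term is at most $\alpha L$ by a second use of \lem{perturb}, because the pushforwards of $P$ and $Q$ under the fixed map $f_P$ are within total variation $\alpha$. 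The second term involves only the shifted centering, so using $\|x_j\| = 1 \Rightarrow |e_k^T x| \le 1$, hence $|e_k^T x - m_k| \le 2$, a product-rule estimate bounds it by $2|m_k^P - m_k^Q| + 2|m_\ell^P - m_\ell^Q| + |m_k^P - m_k^Q|\,|m_\ell^P - m_\ell^Q| = O(\alpha L)$ once $\alpha L \le 1$ is invoked. Hence every entry of $\mathbf{M} - \mathbf{M}'$ is $O(\alpha L)$ with the stated constant; since $\mathbf{M}$, and $\mathbf{M}'$ on the same support, is $d$-sparse and Hermitian, a Gershgorin / maximum-row-sum bound upgrades this to $\|\mathbf{M} - \mathbf{M}'\|_2 \le 5\alpha L (d+2)$. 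This part is bookkeeping; the only care needed is that \lem{perturb} be applied with the Lipschitz constants of both the marginal and the product inverse CDFs, all covered by the hypothesis.

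For part~1 I would proceed in three layers. (i) Build an approximate $d$-sparse access oracle for $\mathbf{M}$ out of the oracle $U$ of \lem{isometry}: to report $\mathbf{M}_{k,\ell}$ to precision $\delta$, use $U$ to prepare $\ket{x_j}$, extract the overlaps $\braket{e_k}{x_j} = e_k^T x_j$ by Hadamard test plus amplitude estimation, and estimate the inner and outer medians over the data index $j$ by a quantum quantile-estimation routine whose query count scales with $L$ and $\delta$ (using amplitude amplification for the quadratic speedup). (ii) Feed this imprecise oracle into an LCU simulation of $e^{-i\mathbf{M} t}$, relying on the robustness of LCU simulation to oracle error: if the reported entries are within $\delta$ then $\|\mathbf{M} - \widetilde{\mathbf{M}}\|_2 = O(d\delta)$, so $\|e^{-i\mathbf{M} t} - e^{-i\widetilde{\mathbf{M}} t}\| \le O(t d \delta)$. (iii) Wrap the simulation in phase estimation with evolution time $t = \widetilde{O}(d/(\lambda\Lambda))$, chosen both to resolve the $O(\lambda)$ spectral gap on the support of the input and to push the phase-estimation tail leakage outside an $O(\lambda)$ neighbourhood of each true eigenvalue below $\Lambda$; measuring the phase register on input $\ket{x}$ then returns a sample from a distribution $P$ over $O(\lambda)$-approximate eigenvalues of $\mathbf{M}$ with $\big|P(E'_n) - |\braket{x}{E_n}|^2\big| \le \Lambda$, uniformly in $x$. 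Choosing $\delta$ small enough that the accumulated simulation error stays within the $\Lambda$ budget, and multiplying the per-layer costs --- overlap estimation, nested quantile estimation, the $\widetilde{O}(t d)$ entry queries of the sparse simulation, and the phase-estimation iterations --- yields the claimed $O\!\big([d^4/(\Lambda^3\lambda^3)]\log^2(d^2/(\lambda\Lambda))\big)$.

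The main obstacle lies in part~1, in two places. First, one must establish --- this is the ``independent interest'' ingredient advertised in the abstract --- that LCU Hamiltonian simulation degrades gracefully when its entry oracle is known only to precision $\delta$, i.e.\ that the simulation error is governed by $\|\mathbf{M} - \widetilde{\mathbf{M}}\|$ rather than blowing up with the LCU normalization. Second, one must bound how the three error sources --- oracle imprecision, quantile-estimation error, and finite phase-estimation resolution --- combine additively so that a single tolerance $\Lambda$ controls all of them, and in particular account for the rescaling of $\mathbf{M}$ (whose spectral norm can be $\Theta(d)$) needed to fit its spectrum into the phase-estimation window, which is what injects the extra powers of $d$ into the final count. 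Part~2, by contrast, is essentially immediate once \lem{perturb} is available.
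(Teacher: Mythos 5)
Your proposal follows essentially the same three-layer route as the paper's proof: part 2 applies \lem{perturb} entrywise to the inner and outer medians and then promotes the max-norm bound to a spectral-norm bound via sparsity (the paper uses a Vizing's-theorem edge-coloring argument giving $\|A-B\|\le(d+2)\|A-B\|_{\max}$ where you invoke Gershgorin, which is an acceptable and in fact slightly tighter variant), and part 1 builds an imprecise element oracle for $\mathbf{M}$ from Hadamard-test amplitude estimation plus a coherent binary-search quantile routine, feeds it into a robustified truncated-Taylor LCU simulation of $e^{-i\mathbf{M}}$, and wraps phase estimation around it, which is exactly the chain \lem{ip} $\to$ \thm{PCAmain} $\to$ \cor{covbd} $\to$ \lem{Mbd} $\to$ \lem{sim1} $\to$ \thm{mainPCA} in the appendix. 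Two small bookkeeping notes: your explicit two-term split of the median error in part 2 (outer-median shift versus centering shift) is cleaner than the paper's single displayed inequality, and the $d^4$ in part 1 is more precisely the product of $O(d^2)$ LCU segments with the $O(d^2/\epsilon^2)$ amplitude-estimation cost of reaching per-element precision $\gamma\in O(\epsilon/d)$ as required by \lem{Mbd}, rather than a spectral rescaling of $\mathbf{M}$ into the phase-estimation window.
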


The proof is somewhat involved.  It involves the use of linear-combinations of unitary simulation techniques and introduces a generalizations to these methods that shows that they can continue to function when probabilistic oracles are used for the matrix elements of the matrix $\mathbf{M}$ (see~\lem{Mbd}) without entanglement to the control register creating problems with the algorithm.  Furthermore, we need to show that the errors in simulation do not have a major impact on the output probability distributions of the samples drawn from the eigenvalues/ vectors of the matrix.  We do this using an argument based on perturbation theory under the assumption that all relevant eigenvalues have multiplicity $1$.  For these reasons, we encourage the interested reader to look at the appendix for all technical details regarding the proof.

From~\thm{finalPCA} we see that the query complexity required to sample from the eigenvectors of the robust PCA matrix is exponentially lower in some cases than what would be expected from classical methods.  While this opens up the possibility of profound quantum advantages for robust principal component analysis, a number of caveats exist that limit the applicability of this method:
\begin{enumerate}
\item The cost of preparing the relevant input data $\ket{x}$ will often be exponential unless $\ket{x}$ takes a special form.
\item The proofs we provide here require that the gaps in the relevant portion of the eigenvalue spectrum are large in order to ensure that the perturbations in the eigenvalues of the matrix do not sufficiently large as to distort the support of the input test vector.
\item The matrix $\mathbf{M}$ must be polynomially sparse.
\item The desired precision for the eigenvalues of the robust PCA matrix is inverse polynomial in the problem size.
\item The eigenvalue gaps are polynomially large in the problem size.
\end{enumerate}
These assumptions preclude it from providing exponential advantages in many cases, however, it does not preclude exponential advantages in quantum settings where the input is given by an efficient quantum procedure.  Some of these caveats can be relaxed by using alternative simulation methods or exploiting stronger promises about the structure of $\mathbf{M}$.  We leave a detailed discussion of this for future work.  The final two assumptions are in practice may be the strongest restrictions for our method, or at least the analysis we provide for the method, because the complexity diverges rapidly with both quantities.  It is our hope that subsequent work will improve upon this, perhaps by using more advanced methods based on LCU techniques to eschew the use of amplitude estimation as an intermediate step.

Looking at the problem from a broader perspective, we have shown that quantum PCA can be applied to defend against~\threatM{1}.  Specifically, we have resilience to . in a manner that helps defend against attacks on the training corpus either directly which allows the resulting classifiers to be made robust to adversaries who control a small fraction of the training data.  This shows that ideas from adversarial quantum machine learning can be imported into quantum classifiers.  In contrast, before this work it was not clear how to do this because existing quantum PCA methods cannot be easily adapted from using a mean to a median.  Additionally, this robustness also can be valuable in non-adversarial settigns becuase it makes estimates yielded by PCA less sensitive to outliers which may not necessarily be added by an adversary.  We will see this theme repeated in the following section where we discuss how to perform quantum bagging and boosting.

\section{Bootstrap Aggragation and Boosting}
Bootstrap aggragation, otherwise known as bagging for short, is another approach that is commonly used to increase the security of machine learning as well as improve the quality of the decision boundaries of the classifier.  The idea behind bagging is to replace a single classifier with an ensemble of classifiers.  This ensemble of classifiers is constructed by randomly choosing portions of the data set to feed to each classifier.  A common way of doing this is bootstrap aggragation, wherein the training vectors are chosen randomly with replacement from the original training set.  Since each vector is chosen with replacement, with high probability each training set will exclude a constant fraction of the training examples.  This can make the resultant classifiers more robust to outliers and also make it more challenging for an adversary to steal the model.

We can abstract this process by instead looking at an ensemble of classifiers that are trained using some quantum subroutine.  These classifiers may be trained with legitimate data or instead may be trained using data from an adversary.  We can envision a classifier, in the worst case, as being compromised if it receives the adversary's training set.  As such for our purposes we will mainly look at bagging through the lens of boosting, which uses an ensemble of different classifiers to assign a class to a test vector each of which may be trained using the same or different data sets.  Quantum boosting has already been discussed in~\cite{schuld2017quantum}, however, our approach to boosting differs considerably from this treatment. 

The type of threat that we wish to address with our quantum boosting protocol is given below.
\begin{threat}
Assume that the user is provided a programmable oracle, $C$, such that $C\ket{j} \ket{x} = \ket{j} C_j \ket{x}$ where each $C_j$ is a quantum classifier that acts on the test vector $\ket{x}$. The adversary is assumed to control a fraction of $0\le \alpha<1$ of all classifiers $C_j$ that the oracle uses to classify data and wishes to affect the classes reported by the user of the boosted quantum classifier that implements $C$.  The adversary is assumed to have no computational restrictions and complete knowledge of the classifiers / training data used in the boosting protocol. \label{threat:2}
\end{threat}

While the concept of a classifier has a clear meaning in classical computing, if we wish to apply the same ideas to quantum classifiers we run into certain challenges.  The first such challenge lies with the encoding of the training vectors.  In classical machine learning training vectors are typically bit vectors.  For example, a binary classifier can then be thought of as a map from the space of test vectors to $\{-1,1\}$ corresponding to the two classes.  This is if $C$ were a classifier then $Cv = \pm v$ for all vectors $v$ depending on the membership of the vector.  Thus every training vector can be viewed as an eigenvector with eigenvalue $\pm 1$.  This is illustrated for unit vectors in~\fig{eigenspace}.

Now imagine we instead that our test vector is a quantum state $\ket{\psi}$ in $\mathbb{C}^2$.  Unlike the classical setting, it may be physically impossible for the classifier to know precisely what $\ket{\psi}$ is because measurement inherently damages the state.  This makes the classification task fundamentally different than it is in classical settings where the test vector is known in its entirety.  Since such vectors live in a two-dimensional vector space and they comprise an infinite set, not all $\ket{\psi}$ can be eigenvectors of the classifier $C$.  However, if we let $C$ be a classifier that has eigenvalue $\pm 1$ we can always express  $\ket{\psi} = a\ket{\phi_+} + \sqrt{1-|a|^2} \ket{\phi_-}$, where $C\ket{\phi_\pm} =\pm \ket{\phi_\pm}$.  Thus we can still classify in the same manner, but now we have to measure the state repeatedly to determine whether $\ket{\psi}$ has more projection onto the positive or negative eigenspace of the classifier.  This notion of classification is a generalization of the classical idea and highlights the importance of thinking about the eigenspaces of a classifier within a quantum framework.

Our approach to boosting and bagging embraces this idea.  The idea is to combine an ensemble of classifiers to form a weighted sum of classifiers $C = \sum_j b_j C_j$ where $b_j>0$ and $\sum_{j} b_j =1$. Let $\ket{\phi}$ be a simultaneous $+1$ eigenvector of each $C_j$, ie $C_j \ket{\phi} = \ket{\phi}$
\begin{equation}
C \ket{\phi} =\sum_j b_j C_j \ket{\phi} = \ket{\phi}.
\end{equation}
The same obviously holds for any $\ket{\phi}$ that is a negative eigenvector of each $C_j$.
That is any vector that is in the simultaneous positive or negative eigenspace of each classifier will be deterministically classified by $C$.

\begin{figure}[t!]
\includegraphics[width=0.4\linewidth]{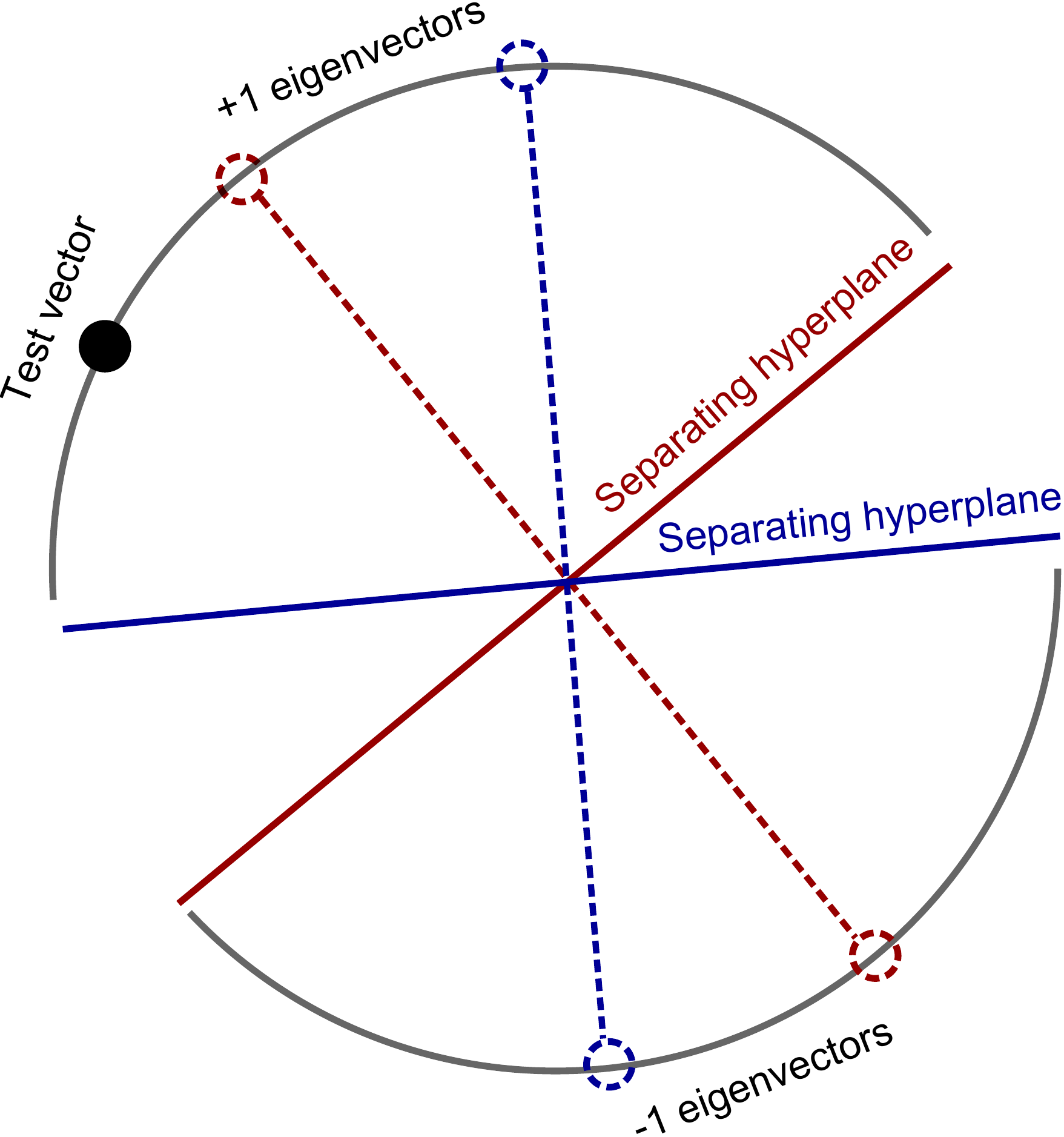}
\caption{Eigenspaces for two possible linear classifiers for linearly separated unit vectors.  Each trained model provides a different separating hyperplane and eigenvectors.  Bagging aggragates over the projections of the test vector onto these eigenvectors and outputs a class based on that aggragation.\label{fig:eigenspace} }
\end{figure}

The simplest way to construct a classifier out of an ensemble of classifiers is to project the test vector onto the eigenvectors of the sum of the classifiers and compute the projection of the state into the positive and negative eigenvalue subspaces of the classifier.  This gives us an additional freedom not observed in classical machine learning. While the positive and negative eigenspaces are fixed for each classifier in the classical analogues of our classifier, here they are not.  Also we wish our algorithm to function for states that are fed in a streaming fashion.  That is to say, we do not assume that when a state $\ket{\psi}$ is provided that we can prepare a second copy of this state.  This prevents us from straight forwardly measuring the expectation of each of the constituent $C_j$ to obtain a classification for $\ket{\psi}$.  It also prevents us from using quantum state tomography to learn $\ket{\psi}$ and then applying a classifier on it.  We provide a formal definition of this form of quantum boosting or bagging below.

\begin{definition}
Two-class quantum boosting or bagging is defined to be a process by which an unknown test vector is classified by projecting it onto the eigenspace of $C = \sum_j b_j C_j$ and assigning the class to be $+1$ if the probability of projecting onto the positive eigenspace is greater than $1/2$ and $-1$ otherwise.  Here each $C_j$ is unitary with eigenvalue $\pm 1$, $\sum_{j} b_j =1$, $b_j\ge 0$, and at least two $b_j$ that correspond to distinct $C_j$ are positive.\label{def:C}
\end{definition}

We realize such a classifier via phase estimation.  But before discussing this, we need to abstract the input to the problem for generality.  We do this by assuming that each classifier, $C_j$, being included is specified only by a weight vector $w_j$.  If the individual classifiers were neural networks then the weights could correspond to edge and bias weights for the different neural networks that we would find by training on different subsets of the data.  Alternatively, if we were forming a committee of neural networks of different structures then we simply envision taking the register used to store $w_j$ to be of the form $[{\rm tag},w_j]$ where the tag tells the system which classifier to use.  This allows the same data structure to be used for  arbitrary classifiers.

One might wonder why we choose to assign the class based on the mean class output by $C$ rather than the class associated with the mean of $C$.  In other words, we could have defined our quantum analogue of boosting such that the class is given by the expectation value of $C$ in the test state, ${\rm sign}(\bra{\psi} C\ket{\psi})$.  In such a case, we would have no guarantee that the quantum classifier would be protected against the adversary.  The reason for this is that the mean that is computed is not robust.  For example, assume that the ensemble consists of $N$ classifiers such that for each $C_j$, $|\bra{\psi}C_j\ket{\psi}|\le 1/(2N)$ and assume the weights are uniform (if they are not uniform then the adversary can choose to replace the most significant classifier and have an even greater effect).  Then if an adversary controls a single classifier example and knows the test example then they could replace $C_1$ such that $|\bra{\psi} C_1 \ket{\psi}|=1$.  The assigned class is then
\begin{equation}
{\rm sign}\left(\frac{\bra{\psi} C_1 \ket{\psi}}{N}-\frac{1}{2N}\right)< {\rm sign}\left(\frac{\bra{\psi} C_1 \ket{\psi}}{N}+\sum_{j=2}^{N}\frac{\bra{\psi} C_j \ket{\psi}}{N} \right)<{\rm sign}\left(\frac{\bra{\psi} C_1 \ket{\psi}}{N}+\frac{1}{2N} \right).
\end{equation}
In such an example, even a single compromised classifier can impact the class returned because by picking $\bra{\psi}C_1 \ket{\psi} = \pm 1$ the adversary has complete control over the class returned.  In contrast, we show that compromising a single quantum classifier does not have a substantial impact on the class if our proposal is used assuming the eigenvalue gap between positive and negative eigenvalues of $C$ is sufficiently large in the absence of an adversary.

We formalize these ideas in a quantum setting by introducing  quantum blackboxes, $\mathcal{T}$, $\mathcal{B}$ and $\mathcal{C}$.  The first takes an index and outputs the corresponding weight vector.  The second prepares the weights for the different classifiers in the ensemble $C$.  The final blackbox applies, programmed via the weight vector, the classifier on the data vector in question.  We formally define these black boxes below.

\begin{definition}
Let $\mathcal{C}$ be a unitary operator such that if $\ket{w}\in \mathbb{C}^{2^{n_w}}$ represents the weights that specify the classifier $\mathcal{C} \ket{w}\ket{\psi}= \ket{w}C(w)\ket{\psi}$  for any state $\ket{\psi}$. Let $\mathcal{T}$ be a unitary operator that, up to local isometries performs $\mathcal{T} \ket{j}\ket{0} = \ket{j}\ket{w_j}$, which is to say that it generates the weights for classifier $j$ (potentially via training on a subset of the data).  Finally,  define unitary $B: B\ket{0} = \sum_j \sqrt{b_j} \ket{j}$ for non-negative $b_j$ that sum to $1$.
\end{definition}

In order to apply phase estimation on $C$, for a fixed and unknown input vector, $\ket{\psi}$, we need to be able to simulate $e^{-iC t}$.  Fortunately, because each $C_j$ has eigenvalue $\pm 1$ it is easy to see that $C_j^2 =1$ and hence is Hermitian.  Thus $C$ is a Hamiltonian.  We can therefore apply Hamiltonian simulation ideas to implement this operator and formally demonstrate this in the following lemma.

\begin{lemma}
Let $C= \sum_{j=1}^M b_j C_j$ where each $C_j$ is Hermitian and unitary and has a corresponding weight vector $w_j$ for $b_j\ge 0~\forall~j$ and $\sum_j b_j =1$.  Then for every $\epsilon>0$ and $t\ge 0$ there exists a protocol for implementing a non-unitary operator, $W$, such that for all $\ket{\psi}$  $\|e^{-iC t}\ket{\psi}-W\ket{\psi}\|\le \epsilon$  using a number of queries to $\mathcal{C}$, $B$ and $\mathcal{T}$ that is in $O(t\log(t/\epsilon)/\log\log(t/\epsilon))$.\label{lem:sim}
\end{lemma}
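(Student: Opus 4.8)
The plan is to use the truncated-Taylor-series linear-combination-of-unitaries (LCU) construction. First I would record the elementary structural facts: since each $C_j$ is Hermitian and unitary, $C_j^2 = C_j C_j^\dagger = \mathbb{I}$, so any ordered product $C_{j_k}\cdots C_{j_1}$ is again unitary, and $C=\sum_j b_j C_j$ is Hermitian with $\norm{C}\le\sum_j b_j\norm{C_j}=1$. Expanding the exponential and each power of $C$,
\begin{equation}
e^{-iCt}=\sum_{k=0}^{\infty}\frac{(-it)^k}{k!}C^k=\sum_{k=0}^{\infty}\sum_{j_1,\ldots,j_k}\frac{(-it)^k}{k!}\,b_{j_1}\cdots b_{j_k}\;C_{j_k}\cdots C_{j_1},
\end{equation}
which displays $e^{-iCt}$ as an LCU with nonnegative weights $c_{k,\vec\jmath}=\tfrac{t^k}{k!}b_{j_1}\cdots b_{j_k}$ (the phases $(-i)^k$ absorbed into the selected unitaries) whose total $1$-norm is $\sum_k t^k/k!=e^{t}$. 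Truncating the outer sum at order $K$ costs at most $\sum_{k>K}t^k/k!$ in operator norm, which is below a target $\delta$ once $K\in O(\log(1/\delta)/\log\log(1/\delta))$ by a Stirling estimate (provided $t\le\ln 2$).

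Second, to keep the coefficient $1$-norm bounded I would split the evolution into $r=\lceil t/t_0\rceil$ segments of length $t_0\le\ln 2$, so that on each segment the truncated series has $1$-norm at most $\sum_{k=0}^{K}t_0^k/k!\le e^{t_0}\le 2$. For each segment I would build the standard PREPARE/SELECT pair. $\mathrm{PREPARE}\ket{0}=\sum_{k,\vec\jmath}\sqrt{c_{k}\,b_{j_1}\cdots b_{j_k}}\,\ket{k}\ket{j_1,\ldots,j_k}$ (with $c_k\propto t_0^k/k!$) is assembled from $K$ invocations of $B$ on the index registers together with an $O(K)$-gate, query-free preparation of the $\ket{k}$ register; the unary-encoding truncation trick of Berry--Childs--Cleve--Kothari--Somma handles the finite cutoff cleanly. $\mathrm{SELECT}=\sum_k\ketbra{k}{k}\otimes(\text{apply }C_{j_k}\cdots C_{j_1})$, where each factor $C_{j_i}$ is realized by the sandwich $\mathcal{T}^\dagger\,\mathcal{C}\,\mathcal{T}$ acting on an ancilla holding $\ket{j_i}$: $\mathcal{T}$ writes $w_{j_i}$, $\mathcal{C}$ applies $C(w_{j_i})=C_{j_i}$, and $\mathcal{T}^\dagger$ erases the weight register. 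A $k$-fold product thus costs $O(k)=O(K)$ queries to $\mathcal{C},\mathcal{T}$, so one $\mathrm{SELECT}$ costs $O(K)$ queries and one $\mathrm{PREPARE}$ costs $O(K)$ queries to $B$. Then $\mathrm{PREPARE}^\dagger\,\mathrm{SELECT}\,\mathrm{PREPARE}$ block-encodes, up to a normalization $\le 2$, the truncated segment propagator; because the normalization is $O(1)$, a constant number of rounds of oblivious amplitude amplification yields a unitary on system-plus-ancilla whose all-ancillas-zero block $W_{\mathrm{seg}}=(\bra{0}\otimes I)\,\tilde U_{\mathrm{seg}}\,(\ket{0}\otimes I)$ is non-unitary and within $\epsilon/r$ of the (renormalized) segment evolution on every $\ket\psi$.

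Third, I would chain the segments, set $W=W_{\mathrm{seg}}^{(r)}\cdots W_{\mathrm{seg}}^{(1)}$, and accumulate errors subadditively: $r$ segments each carrying $O(\epsilon/r)$ truncation-plus-amplification error gives total error $O(\epsilon)$. The query count is $O(1)$ amplification rounds per segment times $O(K)$ queries per $\mathrm{SELECT}/\mathrm{PREPARE}$ times $r$ segments, i.e. $O(rK)=O\!\big(t\log(r/\epsilon)/\log\log(r/\epsilon)\big)$; since $r\in O(t)$ this is $O\!\big(t\log(t/\epsilon)/\log\log(t/\epsilon)\big)$ queries to each of $\mathcal{C}$, $B$, $\mathcal{T}$, as claimed.

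The step I expect to require the most care is the construction and verification of $\mathrm{SELECT}$: one must check that the $\mathcal{T}^\dagger\mathcal{C}\mathcal{T}$ sandwich genuinely implements $C_{j_i}$ controlled on a register holding $\ket{j_i}$ without leaving residual entanglement in the weight ancilla, and that nesting $k$ such sandwiches composes exactly to the ordered product $C_{j_k}\cdots C_{j_1}$ appearing in the expansion; once that is in place, the Taylor truncation bound, the per-segment $1$-norm bound, and the query tally are routine. Note that here, unlike in the robust-PCA application, the oracles $B$, $\mathcal{T}$, $\mathcal{C}$ are exact unitaries, so the complications of an imprecise matrix-element oracle treated in \lem{Mbd} do not arise, which is what keeps this argument comparatively clean.
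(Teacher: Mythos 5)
Your proposal is correct and follows essentially the same approach as the paper: both reduce the problem to the truncated Taylor series LCU simulation of Berry--Childs--Cleve--Kothari--Somma, identify $\mathrm{select}(V) = \mathcal{T}^\dagger \mathcal{C}\mathcal{T}$ as the paper's analogue of the segment-level select oracle, and use $\sum_j b_j = 1$ so that $T = t$, giving the $O(t\log(t/\epsilon)/\log\log(t/\epsilon))$ query count. The only difference is one of exposition: the paper invokes the BCCKS algorithm as a black box and simply cites its query complexity, whereas you unpack the segmentation, the truncation order $K$, and the PREPARE/SELECT structure explicitly.
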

\begin{proof}
The proof follows from the truncated Taylor series simulation algorithm. By definition we have that $B\ket{0}= \sum_j \sqrt{b_j} \ket{j}$.  Now let us consider the operation select$(V):= \mathcal{T}^\dagger \mathcal{C} \mathcal{T}$.  This operation has the effect that, up to isometries, $\mathcal{C}\ket{j} \ket{\psi} = \ket{j} C_j \ket{\psi}$.  The operations ${\rm select}(V)$ and $B$ take the exact same form as their analogues in the Taylor-series simulation algorithm.  If $T=\sum_j |b_j| t$ then the Taylor series simulation algorithm requires an expansion up to order $K\in O(\log(T/\epsilon)/\log\log(T/\epsilon))$ and the use of robust oblivious amplitude estimation requires $r$ such segments where $r\in O(T)$.  Thus the total number of queries to select$(V)$ made in the simulation is $O(Kr)= O(T\log(T/\epsilon)/\log\log(T/\epsilon)$, and the number of queries to $B$ in the algorithm is $O(r)\subseteq O(T)$.  

Next, by assumption we have that $\sum_j b_j=1$ and therefore $T=t$.  Thus the number of queries to select$(V)$ and $B$ in the algorithm is in $O(t\log(t/\epsilon)/\log\log(t/\epsilon))$.  The result then follows after noting that a single call to select$(V)$ requires $O(1)$ calls to $V$ and $C$.
\end{proof}

Note that in general purpose simulation algorithms, the application of select$(V)$ will require a complex series of controls to execute; whereas here the classifier is made programmable via the $\mathcal{T}$ oracle and so the procedure does not explicitly depend on the number of terms in $C$ (although in practice it will often depend on it through the cost of implementing $\mathcal{T}$ as a circuit).  For this reason the query complexity cited above can be deceptive if used as a surrogate for the time-complexity in a given computational model.

With the result of~\lem{sim} we can now proceed to finding the cost of performing two-class bagging or boosting where the eigenvalue of $C$ is used to perform the classification.  The main claim is given below.
\begin{theorem}
Under the assumptions of~\lem{sim}, the number of queries needed to $\mathcal{C}$, $B$ and $\mathcal{T}$ to project $\ket{\psi}$ onto an eigenvector of $C$ with probability at least $1-\delta$ and estimate the eigenvalue within error $\epsilon$ is in $O((\frac{1}{\delta\epsilon})\log(1/\epsilon)/\log\log(1/\epsilon))$.
\end{theorem}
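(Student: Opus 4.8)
The plan is to combine the Hamiltonian simulation guarantee of \lem{sim} with standard phase estimation, and then track how the two sources of error — the simulation error $\epsilon_{\rm sim}$ in implementing $e^{-iCt}$ and the intrinsic phase-estimation error — propagate into the failure probability $\delta$ and the eigenvalue precision $\epsilon$. First I would recall that since each $C_j$ is Hermitian and unitary with eigenvalues $\pm 1$, the operator $C=\sum_j b_j C_j$ is Hermitian with spectrum contained in $[-1,1]$, so it is a bona fide Hamiltonian whose eigenvalues we wish to resolve to additive error $\epsilon$. Running phase estimation on the (approximate) unitary $e^{-iCt_0}$ for a fixed evolution time $t_0$, with $m$ bits of precision in the phase register, yields an estimate of the phase $Ct_0 \bmod 2\pi$; choosing $t_0 = \Theta(1)$ so that distinct eigenvalues of $C$ map to distinct phases, one needs $m = O(\log(1/\epsilon))$ bits to get eigenvalue resolution $\epsilon$, and the longest controlled evolution invoked is $e^{-iC t_0 2^{m-1}}$, i.e.\ total simulated time $t = O(2^m) = O(1/\epsilon)$.

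The second step is the error budget. Textbook phase estimation with $m+O(\log(1/\delta_{\rm pe}))$ ancilla qubits projects onto an eigenvector and returns an estimate good to $2^{-m}$ with probability $\ge 1-\delta_{\rm pe}$; but here each controlled-$e^{-iC\tau}$ is only implemented up to trace-norm-type error by the non-unitary $W$ of \lem{sim}. I would invoke \lem{sim} with per-segment simulation error $\epsilon_{\rm sim}$ small enough that the \emph{total} accumulated simulation error over all controlled evolutions in the phase-estimation circuit is $O(\delta)$; since the total simulated time is $O(1/\epsilon)$, it suffices to take $\epsilon_{\rm sim} = O(\delta\epsilon)$. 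A triangle-inequality / hybrid argument (replacing each ideal $e^{-iC\tau}$ by $W$ one at a time, and noting the total evolution time, not the number of gates, controls the accumulated error because the Taylor-series method's cost is additive in time) then shows the output distribution of the imperfect circuit is within $O(\delta)$ in total variation of the ideal one, so the combined success probability is still $1-\delta$ after rescaling constants.

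The third step is just to add up the query counts. By \lem{sim}, implementing all the controlled evolutions — total simulated time $t = O(1/\epsilon)$, simulation precision $\epsilon_{\rm sim} = O(\delta\epsilon)$ — costs $O\bigl(t\log(t/\epsilon_{\rm sim})/\log\log(t/\epsilon_{\rm sim})\bigr) = O\bigl((1/\epsilon)\log(1/(\delta\epsilon))/\log\log(1/(\delta\epsilon))\bigr)$ queries to $\mathcal{C}$, $B$ and $\mathcal{T}$; absorbing the $\log(1/\delta)$ inside the logarithm and simplifying (and noting the $1/\delta$ overhead in the running time enters only through this logarithmic factor plus the extra ancillas, whose preparation is query-free) gives the stated $O\bigl((\tfrac{1}{\delta\epsilon})\log(1/\epsilon)/\log\log(1/\epsilon)\bigr)$ — here the leading $1/\delta$ comes from needing $\epsilon_{\rm sim}=O(\delta\epsilon)$, which pushes the argument of the logarithm but, more importantly, I would actually obtain the $1/\delta$ prefactor by the standard trick of repeating the whole phase-estimation subroutine to boost success probability, which is the cleanest way to see the linear dependence on $1/\delta$.

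The main obstacle I anticipate is the error-propagation bookkeeping in the second step: the operator $W$ from \lem{sim} is \emph{non-unitary}, so one cannot simply treat it as a small unitary perturbation of $e^{-iC\tau}$ inside a coherent circuit and must instead argue at the level of the (sub-normalized) states or the measurement statistics, being careful that the phase register is entangled with a ``flag'' ancilla from the amplitude-amplification step inside the Taylor-series method. Making precise that the accumulated error scales with the \emph{total simulated time} rather than with the number of phase-estimation stages — and that the non-unitarity only costs a further $O(\epsilon_{\rm sim})$ per segment — is the delicate point; everything else is a routine assembly of known phase-estimation bounds.
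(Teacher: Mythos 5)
Your high-level plan --- phase estimation on $e^{-iC}$ with \lem{sim} supplying each controlled evolution --- is exactly what the paper does; its proof simply cites the coherent phase estimation bound of $O(1/(\delta\epsilon))$ applications of $e^{-iC}$ from \cite{WKS15}, sets $t=1$ in \lem{sim}, and multiplies, leaving all the error bookkeeping you spell out implicit. Your choice $\epsilon_{\rm sim}=O(\delta\epsilon)$ and the resulting $\log(1/(\delta\epsilon))$ is in fact more honest than the stated bound, which silently absorbs the $\log(1/\delta)$.

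However, your handling of the $1/\delta$ prefactor has a genuine gap. Although you correctly note that one pads the phase register with $O(\log(1/\delta))$ extra ancillas, you then carry the total simulated time through as $t=O(1/\epsilon)$ rather than $O(2^{m+O(\log 1/\delta)}) = O(1/(\delta\epsilon))$, and propose to recover the missing $1/\delta$ ``by repeating the whole phase-estimation subroutine.'' This is wrong on two counts. First, classical repetition with majority vote gives an $O(\log(1/\delta))$ overhead, not a linear one, so the claimed scaling would not emerge this way. Second, and more importantly, repetition requires measuring the phase register on each run, which collapses $\ket{\psi}$; the theorem asks to coherently project an \emph{unmeasured, unclonable} $\ket{\psi}$ onto an eigenvector, and one does not get a fresh copy to rerun on. The correct route is the one you mention and then abandon: with the padded phase register the longest controlled evolution already has simulated time $O(1/(\delta\epsilon))$, so you should invoke \lem{sim} with $t=O(1/(\delta\epsilon))$ --- this is precisely the ``coherent phase estimation'' bound the paper cites, and it delivers the $1/\delta$ without any measurement-and-repeat step.
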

\begin{proof}
Since $\|C\|\le 1$, it follows that we can apply phase estimation on the $e^{-iC}$ in order to project the unknown state $\ket{\psi}$ onto an eigenvector.  The number of applications of $e^{-iC}$ needed to achieve this within accuracy $\epsilon$ and error $\delta$ is $O(1/\delta\epsilon)$ using coherent phase estimation~\cite{WKS15}.  Therefore, after taking $t=1$ and using the result of~\lem{sim} to simulate $e^{-iC}$ that the overall query complexity for the simulation is in $O((\frac{1}{\delta\epsilon})\log(1/\epsilon)/\log\log(1/\epsilon))$.
\end{proof}

Finally, there is the question of how large on an impact an adversary can have on the eigenvalues and eigenvectors of the classifier.  This answer can be found using perturbation theory to estimate the derivatives of the eigenvalues and eigenvectors of the classifier $C$ as a function of the maximum fraction of the classifiers that the adversary has compromised.  This leads to the following result.
\begin{corollary}
Assume that $C$ is a classifier defined as per~\defn{C} that is described by a Hermitian matrix that the absolute value of each eigenvalue is bounded below by $\gamma/2$ and that one wishes to perform a classification of a data set based on the mean sign of the eigenvectors in the suport of a test vector.  Given that an adversary controls a fraction of the classifiers equal to $\alpha<\gamma/4$, the adversary cannot affect the classes output by this protocol in the limit where $\epsilon\rightarrow 0$ and $\delta \rightarrow 0$.
\end{corollary}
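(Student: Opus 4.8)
The plan is to use first-order eigenvalue/eigenvector perturbation theory to show that an adversary controlling a fraction $\alpha < \gamma/4$ of the classifiers can move each eigenvalue of $C$ by at most $2\alpha < \gamma/2$, so that no eigenvalue can cross zero, and simultaneously cannot rotate the relevant eigenspaces enough to flip the sign of the majority projection of the test vector. First I would write $C = \sum_j b_j C_j$ for the honest classifier and $C' = C + E$ for the poisoned one, where $E = \sum_{j \in S} b_j (C_j' - C_j)$ with $S$ the set of compromised indices, $\sum_{j\in S} b_j \le \alpha$. Since each $C_j$ and $C_j'$ is unitary with spectrum in $\{-1,+1\}$, we have $\|C_j' - C_j\| \le 2$, hence $\|E\| \le 2\sum_{j\in S} b_j \le 2\alpha$. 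By Weyl's inequality every eigenvalue of $C'$ lies within $2\alpha$ of a corresponding eigenvalue of $C$; since $|\lambda_n(C)| \ge \gamma/2$ and $2\alpha < \gamma/2$, the sign of each eigenvalue is preserved and, moreover, the positive and negative eigenspaces of $C'$ remain separated by a gap of at least $\gamma - 4\alpha > 0$.

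Next I would control the rotation of the eigenspaces. Let $\Pi_+$, $\Pi_-$ be the spectral projectors onto the positive and negative eigenspaces of $C$, and $\Pi_+'$, $\Pi_-'$ those of $C'$. The Davis--Kahan $\sin\Theta$ theorem, using the gap argument above, gives $\|\Pi_+ - \Pi_+'\| \le \|E\|/(\text{gap}) \le 2\alpha/(\gamma - 4\alpha)$. The classification outcome for a fixed test state $\ket{\psi}$ is determined by whether $\bra{\psi}\Pi_+\ket{\psi} > 1/2$; since in the limits $\epsilon \to 0$, $\delta \to 0$ the phase-estimation-based protocol of the preceding theorem reports exactly this quantity, it suffices to show $\bra{\psi}\Pi_+\ket{\psi}$ and $\bra{\psi}\Pi_+'\ket{\psi}$ lie on the same side of $1/2$. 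This follows because $|\bra{\psi}\Pi_+\ket{\psi} - \bra{\psi}\Pi_+'\ket{\psi}| \le \|\Pi_+ - \Pi_+'\|$, which can be made arbitrarily small — but to get a clean statement one either assumes the honest projection is bounded away from $1/2$, or, more carefully, observes that the hypothesis $\alpha < \gamma/4$ combined with the structure of $C$ (whose eigenvalues are convex combinations of $\pm 1$) forces the honest projection onto each eigenspace to already respect the $1/2$ threshold robustly; I would make this precise by relating $\bra{\psi}\Pi_+\ket{\psi}$ to $\tfrac12(1 + \bra{\psi}C\ket{\psi})$ only in the special case of commuting $C_j$, and otherwise invoke the spectral gap $\gamma/2$ to bound how close to $1/2$ the honest projection can sit.

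The main obstacle I anticipate is the last point: unlike the eigenvalue statement, which is an immediate Weyl bound, the eigenvector statement needs the honest projection $\bra{\psi}\Pi_+\ket{\psi}$ to be bounded away from $1/2$ by an amount comparable to $\alpha/\gamma$, and this is not automatic from the stated hypotheses alone — a test vector sitting exactly on the decision boundary can always be flipped by an arbitrarily small perturbation. I expect the resolution is that the corollary's phrasing "cannot affect the classes output" is understood for test vectors not pathologically balanced between the eigenspaces, or equivalently that the claim is really that the \emph{decision boundary} itself moves by only $O(\alpha/\gamma)$, which is what the Davis--Kahan bound delivers. I would state the result in that form, carry out the Weyl and Davis--Kahan estimates explicitly, take $\epsilon,\delta \to 0$ to remove the algorithmic error, and conclude that the perturbation to the reported class probability is $O(\alpha/\gamma)$ and hence vanishes relative to the honest margin whenever $\alpha < \gamma/4$.
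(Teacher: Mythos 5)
Your proposal covers the paper's argument and then correctly goes further.  For the eigenvalue part, the paper and you reach the same bound $\|C'-C\| \le 2\alpha$ and conclude that no eigenvalue can change sign when $2\alpha < \gamma/2$; the only cosmetic difference is that the paper gets there by integrating the first-order perturbative derivative $\partial_\sigma E_n = \bra{E_n}\Delta\ket{E_n}$ from its Eq.~(A15), whereas you invoke Weyl's inequality directly.  Either gives $|E'_j - E_j|\le 2\alpha$.

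The substantive difference is that the paper's proof stops there.  It asserts that sign-preservation of the eigenvalues suffices and never addresses the rotation of the eigen\emph{vectors}.  But under Definition~\ref*{def:C} the reported class is determined by whether $\bra{\psi}\Pi_+\ket{\psi} > 1/2$, and $\Pi_+$ is the spectral projector, which certainly moves under the perturbation even when every eigenvalue keeps its sign.  You correctly flag this: the Weyl step is not enough, a Davis--Kahan-type $\sin\Theta$ estimate is needed to bound $\|\Pi_+ - \Pi_+'\|$, and even then a test state with $\bra{\psi}\Pi_+\ket{\psi}$ exactly at $1/2$ can be flipped by an arbitrarily small perturbation, so the corollary as literally stated needs either a margin assumption on the test vector or a reinterpretation as a bound on how far the decision boundary moves.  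This is a genuine lacuna in the paper's proof that your proposal identifies and (modulo that extra hypothesis) repairs.  In short: same eigenvalue estimate by a slightly different route, plus an additional eigenspace-rotation argument that the paper needs but omits.
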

\begin{proof}
If an adversary controls a fraction of data equal to $\alpha$ then we can view the perturbed classifier as 
\begin{equation}
C' = \sum_{j=1}^{(1-\alpha) N} b_j C_j + \sum_{j=(1-\alpha)N+1}^N b_j C'_j= C + \sum_{j=1}^N \delta_{j> (1-\alpha)N} b_j (C'_j-C_j).
\end{equation}
This means that we can view $C'$ as a perturbation of the original matrix by a small amount.  In particular, this implies that because each $C_j$ is Hermitian and unitary
\begin{equation}
\|C' -C\| \le \alpha  \max_j\|C_j'-C_j\| \le 2\alpha .
\end{equation}
Using perturbation theory, (see~\eq{eigderiv0} in the appendix), we show that the maximum shift in any eigenvalue due to this perturbation is at most $2\alpha$. By which we mean that if $E_j$ is the $j^{\rm th}$ eigenvalue of $C$ and $C'(\sigma)= C + \sigma \sum_{j=1}^N \delta_{j> (1-\alpha)N} b_j (C'_j-C_j)$ then the corresponding eigenvalue $E_j(\sigma)$ obeys $|E_j(\sigma) -E_j| \le 2\sigma \alpha + O(\sigma^2)$.  By integrating this we find that $E_j(1)$ (informally the eigenvalue of $C'$ corresponding to the $j^{\rm th}$ eigenvalue of $C$) we find that $|E_j(1) - E_j|=|E_j'-E_j| \le 2\alpha$.  

Given that $E_j\ge \gamma/2$ the above argument shows that ${\rm sign}(E_j) = {\rm sign}(E_j')$ if $2\alpha <\gamma/2$.  Thus if the value of $\alpha$ is sufficiently small then none of the eigenvectors returned will have the incorrect sign.  This implies that if we neglect errors in the estimation of eigenvalues of $C$ then the adversary can only impact the class output by $C$ if they control a fraction greater than $\gamma/4$.  In the limit as $\delta\rightarrow 0$ and $\epsilon \rightarrow 0$ these errors become negligible and the result follows.
\end{proof}

From this we see that adversaries acting in accordance with \threatM{2} can be thwarted using boosting or bagging that have control over a small fraction of the quantum classifiers used in a boosting protocol.  This illustrates that by generalizing the concept of boosting to a quantum setting we can not only make our classifiers better but also make them more resilient to adversaries who control a small fraction of the classifiers provided.  This result, combined with our previous discussion of robust quantum PCA, shows that quantum techniques can be used to defend quantum classifiers against causative attacks by quantum adversaries.

 However, we have not discussed exploratory attacks which often could be used as the precursors to such attacks or in some cases may be the goal in and of itself.  We show below that quantum can be used in a strong way to defend against some classes of these attacks.

\section{Quantum Enhanced Privacy for Clustering}
Since privacy is one of the major applications of quantum technologies, it should come as no surprise that quantum computing can help boost privacy in machine learning as well.  As a toy example, we will first discuss how quantum computing can be used to allow $k$--means clustering to be performed without leaking substantial information about any of the training vectors.

The $k$--means clustering algorithm is perhaps the most ubiquitous algorithm used to cluster data.  While there are several variants of the algorithm, the most common variant attmepts to break up a data set into $k$ clusters such that the sum of the intra-cluster variances is minimized.  In particular, let $f:(x_j,\{\mu\})\mapsto (1,\ldots, k)$ gives the index of the set of centroids $\{\mu\}=\{\mu_1,\ldots,\mu_k\}$  and $x_j$ are the vectors in the data set.  The $k$--means clustering algorithm then seeks to minimize $\sum_j |x_j - \mu_{f(x_j,\{\mu\})}|_2^2$.  Formally, this problem is \NP--hard which means that no efficient clustering algorithm (quantum or classical) is likely to exist.  However, most clustering problems are not hard examples, which means that clustering generically is typically not prohibitively challenging.

The $k$--means algorithm for clustering is simple.  First begin by assigning the $\mu_p$ to data points sampled at random.  Then for each $x_j$ find the $\mu_p$ that is closest to it, and assign that vector to the $p^{\rm th}$ cluster.  Next set $\mu_1,\ldots,\mu_k$ to be the cluster centroids of each of the $k$ clusters and repeat the previous steps until the cluster centroids converge.

A challenge in applying this algorithm to cases, such as clustering medical data, is that the individual performing the algorithm typically needs to have access to the users information.  For sensitive data such as this, it is difficult to apply such methods to understand structure in medical data sets that are not already anonymized.  Quantum mechanics can be used to address this.  

Imagine a scenario where an experimenter wishes to collaboratively cluster a private data set.  The experimenter is comfortable broadcasting information about the model to the $N$ owners of the data, but the users are not comfortable leaking more than $m$ bits of information about their private data in the process of clustering the data.  Our approach, which is related to quantum voting strategies~\cite{hillery2006towards}, is to share an entangled quantum state between the recipients and use this to anonymously learn the means of the data.

\begin{threat}
Assume that a group of $N$ users wish to apply $k$--means clustering to cluster their data and that an adversary is present that wishes to learn the private data held by at least one of the users as an exploratory attack and has no prior knowledge about any of the users' private data before starting the protocol.  The adversary is assumed to have control over any and all parties that partake in the protocol, authentication is impossible in this setting and the adversary is unbounded computationally.
\label{threat:3}
\end{threat}

The protocol that we propose to thwart such attacks is given below.

\begin{enumerate}
\item The experimenter broadcasts $k$ cluster centroids over a public classical channel.
\item For each cluster centroid, $\mu_p$, the experimenter sends the participant one qubit out of the state  $(\ket{0}^{N} + \ket{1}^N)/\sqrt{2}$.
\item Each participant that decides to contribute applies $e^{- iZ/2N}$ to the qubit corresponding to the cluster that is closest to their vector.  If two clusters are equidistant, the closest cluster is chosen randomly.
\item The experimenter repeats above two steps $O(1/\epsilon)$ times in a phase estimation protocol to learn $P(f(x_j,\{\mu\})=p)$ within error $\epsilon$.  Note that because participants may not participate $\sum_{p=1}^k P(f(x_j,\{\mu\})=p)\le 1$.
\item Next the experimenter performs the same procedure except phase estimation is now performed for each of the $d$ components of $x_j$, a total of $O(d/\epsilon P(f(x_j)=p))$ for each cluster $p$.
\item From these values the experimenter updates the centroids and repeats the above steps until convergence or until the privacy of the users data can no longer be guaranteed.
\end{enumerate}

Intuitively, the idea behind the above method is that each time a participant interacts with their qubits they do so by performing a tiny rotation.  These rotations are so small that individually they cannot be easily distinguished from performing no rotation, even when the best test permitted by quantum mechanics is used.  However, collectively the rotation angles added by each participant sums to non-negligible rotation angle if a large number of participants are pooled.  By encoding their private data as rotation angles, the experimenter can apply this trick to learn cluster means without learning more than a small fraction of a bit of information about any individual participant's private data.

\begin{lemma}
Steps 1-5 of the above protocol take a set of $d$--dimensional vectors $\{x_j:j=1,\ldots, N\}$ held by $N$ potential participants and a set of cluster centroids $\{\mu_p: p=1,\ldots,k\}$ and computes an iteration of $k$--means clustering with the output cluster centroids computed within error $\epsilon$ in the infinity norm and each participant requires a number of single qubit rotations that scales as $O(d/[\min_p P(f(x_j,\{\mu\})=p) \epsilon])$ under the assumption that each participant follows the protocol precisely.\label{lem:rotbound}
\end{lemma}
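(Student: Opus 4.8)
The plan is to track the phase that accumulates in each GHZ-type resource state and thereby reduce the protocol to a collection of single-logical-qubit phase-estimation problems. First I would fix notation: for the current centroids $\{\mu_p\}$, let $n_p = |\{j : f(x_j,\{\mu\}) = p\}|$ count the participants who choose to contribute and whose closest centroid is $\mu_p$, so that $P(f(x_j,\{\mu\})=p) = n_p/N$ and the updated centroid is $\mu_p' = \frac{1}{n_p}\sum_{j:\, f(x_j,\{\mu\})=p} x_j$. The state $(\ket{0}^{N}+\ket{1}^{N})/\sqrt{2}$ is supported on the two-dimensional logical subspace $\mathrm{span}\{\ket{0}^{N},\ket{1}^{N}\}$, and the collective operation in which $n_p$ participants each apply $e^{-iZ/2N}$ to their share acts on that subspace exactly as a logical $Z$-rotation by total angle $n_p/N$, sending the state to $(\ket{0}^{N}+e^{i n_p/N}\ket{1}^{N})/\sqrt{2}$ up to a global phase. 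The key point is that this logical rotation is all the experimenter needs: iterating Steps 2--3, where the $j$-th phase-estimation round applies the unit rotation $2^{j}$ times so as to realize powers of the logical rotation, and summing over the $O(\log(1/\epsilon))$ rounds, uses $O(1/\epsilon)$ applications of the unit rotation in total and pins down $n_p/N$ to additive error $\epsilon$ with no aliasing, since $n_p/N\le 1<2\pi$. This establishes Step 4 at a cost of $O(1/\epsilon)$ rotations per contributing participant.

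Second, I would treat Step 5 identically but with data-dependent angles. After rescaling so every component $(x_j)_\ell$ lies in $[-\pi,\pi)$ (which is free given the norm bound on the training vectors, cf.\ the reduction accompanying~\lem{isometry}), a participant assigned to cluster $p$ applies $e^{-iZ (x_j)_\ell/2N}$ to the $\ell$-th resource qubit of that cluster, so the accumulated logical phase is $\frac{1}{N}\sum_{j:\, f(x_j,\{\mu\})=p}(x_j)_\ell = \frac{n_p}{N}(\mu_p')_\ell = P(f(x_j,\{\mu\})=p)\,(\mu_p')_\ell$. Phase estimation recovers this product to additive error $\delta'$ using $O(1/\delta')$ invocations, and dividing by the Step-4 estimate of $P(f(x_j,\{\mu\})=p)$ returns $(\mu_p')_\ell$. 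Choosing $\delta' = \Theta(\epsilon\,P(f(x_j,\{\mu\})=p))$, and likewise re-estimating $P(f(x_j,\{\mu\})=p)$ itself to additive error $\Theta(\epsilon\,P(f(x_j,\{\mu\})=p))$, costs $O(1/(\epsilon\,P(f(x_j,\{\mu\})=p)))$ per component. Since a given participant only ever contributes to its own cluster $p=f(x_j,\{\mu\})$, running this over the $d$ components makes the per-participant cost of Step 5 equal to $O(d/(\epsilon\,P(f(x_j,\{\mu\})=p)))$, which dominates the $O(1/\epsilon)$ of Step 4; taking the worst participant replaces $P(f(x_j,\{\mu\})=p)$ by $\min_p P(f(x_j,\{\mu\})=p)$ and yields the claimed scaling.

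Third, I would close the error analysis. Writing the centroid estimate as $\widehat{Y}_\ell/\widehat{B}$ with $Y_\ell = P(f(x_j,\{\mu\})=p)(\mu_p')_\ell$ and $B = P(f(x_j,\{\mu\})=p)$, one bounds $|\widehat{Y}_\ell/\widehat{B}-(\mu_p')_\ell| \le 2|\widehat{Y}_\ell-Y_\ell|/B + 2|\widehat{B}-B|/B$ once $\widehat{B}\ge B/2$ and $|(\mu_p')_\ell|\le 1$; the precision choices above make each term $O(\epsilon)$, so $\|\widehat{\mu}_p-\mu_p'\|_\infty = O(\epsilon)$ uniformly in $p$, and rescaling $\epsilon$ by a constant absorbs this. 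Finally I would observe that Steps 2--3 are precisely the assignment step of $k$-means (each participant self-selects its nearest centroid, ties broken at random) while Steps 4--5 are precisely the recomputation of the cluster means, so Steps 1--5 implement one iteration; the non-participation caveat only shrinks the $n_p$ and leaves the conditional iteration correct. I expect the main obstacle to be the error-propagation bookkeeping through the division $\widehat{Y}_\ell/\widehat{B}$ — in particular keeping the precision demanded of $B$ consistent with the stated query count when some cluster is sparsely populated — rather than the essentially immediate reduction of the GHZ dynamics to a logical $Z$-rotation.
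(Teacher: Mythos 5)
Your proposal follows essentially the same route as the paper's proof: treat the GHZ resource as a logical qubit whose collective rotation encodes first the cluster-membership probability $P(f(x_j,\{\mu\})=p)$ and then the product $P(f(x_j,\{\mu\})=p)\,(\mu_p')_\ell$, extract each via iterative phase estimation at $O(1/\epsilon)$ and $O(1/[\epsilon P])$ cost per phase respectively, divide, and observe that each participant only ever touches its own cluster so the $d$-component second phase dominates. Your error-propagation step through the ratio $\widehat{Y}_\ell/\widehat{B}$ is a bit more explicit than the paper's (which only argues $|1/P - 1/(P+\epsilon)| \in O(\epsilon)$ under $P>\epsilon$), but it is the same argument and lands on the same bound.
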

\begin{proof}
Let us examine the protocol from the perspective of the experimenter.  From this perspective the participants' actions can be viewed collectively as enacting blackbox transformations that apply an appropriate phase on the state $(\ket{0}^N +\ket{1}^N)/\sqrt{2}$.  Let us consider the first phase of the algorithm, corresponding to steps $2$ and $3$, where the experimenter attempts to learn the probabilities of users being in each of the $k$ clusters.

First when the cluster centroids are announced to participant $j$, they can then classically compute the distance $x_j$ and each of the cluster centroids efficiently.  No quantum operations are needed to perform this step.  Next for the qubit corresponding to cluster $p$ user $j$ performs a single qubit rotation and uses a swap operation to send that qubit to the experimenter.  This requires $O(1)$ quantum operations.  The collective phase incurred on the state $(\ket{0}^N +\ket{1}^N)/\sqrt{2}$ from each such rotation results in (up to a global phase)
$$
(\ket{0}^N +\ket{1}^N)/\sqrt{2}\mapsto (\ket{0}^N +e^{-i \sum_j \delta_{f(x_j,\{\mu\}),p}/N}\ket{1}^N)/\sqrt{2}=(\ket{0}^N +e^{-i P(f(x_j)=p)}\ket{1}^N)/\sqrt{2}.
$$
It is then easy to see that after querying this black box $t$-times that the state obtained by them applying their rotation $t$ times is $(\ket{0}^N +e^{-i P(f(x_j)=p) t}\ket{1}^N)/\sqrt{2}$.
Upon receiving this, the experimenter performs a series of $N$ controlled--not operations to reduce this state to $(\ket{0} +e^{-i P(f(x_j)=p)}\ket{1})/\sqrt{2}$ up to local isometries.  Then after applying a Hadamard transform the state can be expressed as $\cos(P(f(x_j)=p) t/2) \ket{0} + \sin(P(f(x_j)=p) t/2) \ket{1}$.  Thus the probability of measuring this qubit to be $0$ exactly matches that of phase estimation, wherein $P(f(x)=p)$ corresponds to the unknown phase.  This inference problem is well understood and solutions exist such that if $\{t_q\}$ are the values used in the steps of the inference process then $\sum_{q} |t_q|\in O(1/\epsilon)$  if we wish to estimate $P(f(x_j)=p)$ within error $\epsilon$.    

While it may na\i vely seem that the users require $O(k/\epsilon)$ rotations to perform this protocol, each user in fact only needs to perform $O(1/\epsilon)$ rotations.  This is because each vector is assigned to only one cluster using the above protocol.  Thus only $O(1/\epsilon)$ rotations are needed per participant.

In the next phase of the protocol, the experimentalist follows the same strategy to learn the means of the clustered data from each of the participants.  In this case, however, the blackbox function is of the form
$$
(\ket{0}^N +\ket{1}^N)/\sqrt{2}\mapsto (\ket{0}^N +e^{-i \sum_j [x_j]_q \delta_{f(x_j),p}/N}\ket{1}^N)/\sqrt{2},
$$
where $[x_j]_q$ is the $q^{\rm th}$ component of the vector.  By querying this black box $t$ times and performing the exact same transformations used above, we can apply phase estimation to learn $\sum_j [x_j]_q \delta_{f(x_j),p}/N$ within error $\delta$ using $O(1/\delta)$ queries to the participants.  Similarly, we can estimate 
$$\frac{\sum_j [x_j]_q \delta_{f(x_j),p}}{NP(f(x_j)=p)}= \frac{\sum_j [x_j]_q \delta_{f(x_j),p}}{\sum_j \delta_{f(x,j),p}}$$
within error $\epsilon$ using $O(1/[\epsilon P(f(x_j)=p)])$ querie  if $P(f(x_j)=p)$ is known with no error.

If $P(f(x_j)=p)$ is known within error $\epsilon$ and $P(f(x_j)=p)> \epsilon$ then it is straight forward to see that
$$
\left|\frac{1}{P(f(x_j)=p)}-\frac{1}{P(f(x_j)=p)+\epsilon} \right|\in O(\epsilon).
$$
Then because $\frac{\sum_j [x_j]_q \delta_{f(x_j),p}}{N}\in O(1)$ it follows that the overall error is at most $\epsilon$ given our assumption that $\min_p P(f(x_j)=p) > \epsilon$.  Each component of the centroid can therefore be learned using $O(1/[\min_p P(f(x_j)=p) \epsilon])$ rotations.
Thus the total number of rotations that each user needs to contribute to provide information about their cluster centroid is $O(d/[\min_{p} P(f(x_j)=p)\epsilon])$.

\end{proof}
While this lemma shows that the protocol is capable of updating the cluster centroids in a $k$--means clustering protocol, it does not show that the users' data is kept secret.  In particular, we need to show that even if the experimenter is an all powerful adversary then they cannot even learn whether or not a user actually participated in the protocol.  The proof of this is actually quite simple and is given in the following theorem.
\begin{theorem}
Let $B$ be a participant in a clustering task that repeats the above protocol $R$ times where $B$ has $x_j$ and let $\{\mu^{(r)}\}$ be the cluster centroids at round $r$ of the clustering.  Assume that the eavesdropper $E$ assigns a prior probability that $B$ participated in the above protocol to be $1/2$ and wishes to determine whether $B$ contributed $x_j$ (which may not be known to $E$) to the clustering.  The maximum probability, over all quantum strategies, of $E$ successfully deciding this is $1/2 +O(Rd/[\min_{p,r} P(f(x_j,\{\mu^{(r)}\})=p)N\epsilon])$ if $\min_{p,r} P(f(x_j,\{\mu^{(r)})=p) > \epsilon$. 
\end{theorem}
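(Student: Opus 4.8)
The plan is to recast $E$'s decision problem as quantum state discrimination and to bound the distinguishability of the two hypotheses by the total rotation angle that $B$ imprints over the $R$ rounds. Because $E$ is assumed to control the experimenter, every other participant, and the quantum channel, the entire execution of the protocol---preparation of the states $(\ket{0}^{N}+\ket{1}^N)/\sqrt{2}$, distribution and collection of qubits, all classical broadcasts, and any adaptive processing carried out between rounds---can be absorbed into a fixed family of $E$-controlled unitaries $A_0,A_1,\dots,A_R$ acting on $E$'s (purified) register, interleaved with the action of $B$. In round $r$ the hypothesis ``$B$ participated'' inserts a unitary $U_B^{(r)}$, which is a product of the prescribed elementary rotations $e^{-iZ/2N}$ on the qubits passing through $B$'s hands during that round, whereas the hypothesis ``$B$ abstained'' inserts the identity. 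Writing $\ket{\psi_{\rm yes}}=A_RU_B^{(R)}\cdots A_1U_B^{(1)}A_0\ket{0}$ and $\ket{\psi_{\rm no}}=A_R\cdots A_1A_0\ket{0}$, the Helstrom bound gives that $E$'s optimal probability of deciding correctly with prior $1/2$ is $\tfrac12+\tfrac14\norm{\ketbra{\psi_{\rm yes}}{\psi_{\rm yes}}-\ketbra{\psi_{\rm no}}{\psi_{\rm no}}}_1$, and since $\norm{\ketbra{\phi}{\phi}-\ketbra{\chi}{\chi}}_1=2\sqrt{1-|\braket{\phi}{\chi}|^2}\le 2\norm{\ket{\phi}-\ket{\chi}}$ this is at most $\tfrac12+\tfrac12\norm{\ket{\psi_{\rm yes}}-\ket{\psi_{\rm no}}}$.

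Next I would bound $\norm{\ket{\psi_{\rm yes}}-\ket{\psi_{\rm no}}}$ by a hybrid argument that remains valid under $E$'s adaptivity and use of entanglement: replacing $U_B^{(r)}$ by the identity one round at a time and using unitary invariance of the Euclidean norm, the telescoping estimate gives $\norm{\ket{\psi_{\rm yes}}-\ket{\psi_{\rm no}}}\le\sum_{r=1}^R\norm{U_B^{(r)}-I}$. Applying the same telescoping bound within a single round, together with $\norm{e^{-iZ/2N}-I}=2\sin(1/(4N))\le 1/(2N)$, gives $\norm{U_B^{(r)}-I}\le T_r/(2N)$, where $T_r$ is the number of single-qubit rotations $B$ performs in round $r$. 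By~\lem{rotbound}, $T_r\in O(d/[\min_p P(f(x_j,\{\mu^{(r)}\})=p)\,\epsilon])$---this is where the assumption $\min_{p,r}P(f(x_j,\{\mu^{(r)}\})=p)>\epsilon$ from~\lem{rotbound} enters, and since the estimate is uniform over $x_j$ it does not matter that $E$ does not know $x_j$---so the total is $T:=\sum_r T_r\in O(Rd/[\min_{p,r}P(f(x_j,\{\mu^{(r)}\})=p)\,\epsilon])$.

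Combining the two steps yields $\norm{\ket{\psi_{\rm yes}}-\ket{\psi_{\rm no}}}\le T/(2N)$, and therefore the optimal success probability is at most $\tfrac12+T/(4N)=\tfrac12+O(Rd/[\min_{p,r}P(f(x_j,\{\mu^{(r)}\})=p)\,N\epsilon])$, which is exactly the claimed bound.

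I expect the main obstacle to be the reduction carried out in the first paragraph. Against an adversary who owns every other party, the classical broadcast channel, and the quantum channel, and who may interleave arbitrary operations across the $R$ rounds, one must argue carefully that $B$'s sole influence on the joint state is the product of the stipulated tiny $Z$-rotations---that is, that no side channel (which qubit was touched, timing, or classical metadata generated by $B$'s choice of cluster) can leak more about $B$'s participation than the imprinted phase itself, and that the same argument survives when $B$ applies the protocol honestly but with an $x_j$ unknown to $E$. Once the problem has been honestly reduced to distinguishing $\ket{\psi_{\rm yes}}$ from $\ket{\psi_{\rm no}}$, the telescoping and Helstrom estimates above are routine.
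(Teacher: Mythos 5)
Your proof is correct and follows essentially the same strategy as the paper's: reduce to a binary state-discrimination problem, bound the trace distance between the ``participated'' and ``abstained'' hypotheses by the total rotation $B$ imprints, and use \lem{rotbound} to bound the number of elementary rotations. The only substantive technical difference is how the trace distance is controlled: the paper bounds ${\rm Tr}\left|\rho - e^{-iH/2N}\rho\, e^{iH/2N}\right|$ directly via Hadamard's lemma (the nested-commutator expansion) and the bound ${\rm Tr}|\rho A|\le\|A\|$, whereas you pass to the pure-state fidelity formula for $\|\ketbra{\psi_{\rm yes}}{\psi_{\rm yes}}-\ketbra{\psi_{\rm no}}{\psi_{\rm no}}\|_1$ and then apply a standard hybrid/telescoping estimate $\|\ket{\psi_{\rm yes}}-\ket{\psi_{\rm no}}\|\le\sum_r\|U_B^{(r)}-I\|$ together with $\|e^{-iZ/2N}-I\|\le 1/(2N)$. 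Both routes are routine and give the same $O((q_1+q_2)/N)$ bound. Your explicit modeling of the adversary as interleaved unitaries $A_0,\dots,A_R$ acting on a purified register (with $B$'s actions inserted between them) is arguably a more careful statement of the reduction than the paper's quick assertion that the task reduces to distinguishing $\rho$ from $U\rho U^\dagger$ for an adversary-chosen $\rho$; your hybrid argument makes it clear that the adversary's adaptivity and entanglement cannot help beyond the cumulative phase. The concern you flag at the end about side channels is legitimate in principle but already handled by your formalism: once $B$'s interaction is modeled as the unitary $U_B^{(r)}$ on the qubits $E$ hands over (and $B$ is assumed honest, as the threat model stipulates), there is no additional channel by which $B$'s cluster choice reaches $E$, and the telescoping bound captures everything.
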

\begin{proof}
The proof of the theorem proceeds as follows.  Since $B$ does not communicate classically to the experimenter, the only way that $E$ can solve this problem is by feeding $B$ an optimal state to learn $B$'s data.  Specifically, imagine $E$ usurps the protocol and provides $B$ a state $\rho$.  This quantum state operator is chosen to maximize the information that $E$ can learn about the state.  The state $\rho$, for example, could be entangled over the multiple qubits that would be sent over the protocol and could also be mixed in general.  When this state is passed to $B$ a transformation $U\rho U^\dagger$ is enacted in the protocol.  Formally, the task that $E$ is faced with is then to distinguish between $\rho$ and $U\rho U^\dagger$.

Fortunately, this state discrimination problem is well studied.  The optimal probability of correctly distinguishing the two, given a prior probability of $1/2$, is~\cite{fuchs1999cryptographic}
\begin{equation}
P_{\rm opt} = \frac{1}{2}+ \frac{1}{4}{\rm Tr}|\rho - U\rho U^\dagger|.
\end{equation}
Let us assume without loss of generality that $B$'s $x_j$ is located in the first cluster.  Then $U$ takes the form
\begin{equation}
U=(e^{-i Z/2N})^{\otimes q_1} \otimes (e^{-i Z/2N})^{\otimes q_2} \otimes 1 = e^{-i [\sum_{p=1}^{q_1} Z^{(p)}+\sum_{r=q_1+1}^{q_1+q_2} Z^{(r)}]/2N}:=e^{-iH/2N},
\end{equation}
where $1$ refers to the identity acting on the qubits used to learn data about clusters $2$ through $k$ (as well as any ancillary qubits that $E$ chooses to use to process the bits) and $q_1$ and $q_2$ are the number of rotations used in the first and second phases discussed in~\lem{rotbound}.  $Z^{(p)}$ refers to the $Z$ gate applied to the $p^{\rm th}$ qubit.

Using Hadamard's lemma and the fact that ${\rm Tr}(|\rho A|) \le \|A\|_2$ for any density operator $\rho$ we then see that provided $(q_1+q_2)< N$
\begin{align}
{\rm Tr} \left|\rho - e^{-iH/2N} \rho e^{iH/2N}\right| &= {\rm Tr}\left|\frac{-i}{2N}[H,\rho] -\frac{1}{(2N)^22!} [H,[H,\rho]] +\cdots\right| \nonumber\\
&\le \sum_{n=1}^\infty \frac{\|H\|_2^n}{N^nn!}\in O\left(\frac{q_1+q_2}{N}\right).\label{eq:distbd}
\end{align}

Let $\{\mu^{(r)}\}$ be the cluster cetnroids in round $r$ of the $R$ rounds.  From~\lem{rotbound} we see that the number of rotations used in the $R$ rounds of the first phase obeys $q_1 \in O(R/\epsilon)$ and the number of rotations in the $R$ rounds of the second phase obeys $q_2 \in O(Rd/[\min_{p,r} P(f(x_j,\{\mu^{(r)}\})=p)\epsilon])$.  We therefore have from~\eq{distbd} that if $\min_{p,r} P(f(x_j,\{\mu^{(r)}\})=p) > \epsilon$ then
\begin{equation}
|P_{\rm opt} - 1/2| \in O\left(\frac{Rd}{\min_{p,r} P(f(x_j,\{\mu^{(r)}\})=p) N\epsilon}\right).
\end{equation}
\end{proof}

If the minimum probability of membership over all $k$ clusters and $R$ rounds obeys
\begin{equation}
\min_{p,r} P(f(x_j,\{\mu^{(r)}\})=p)\in\Omega(1/k),
\end{equation} which is what we expect in typical cases, the probability of a malevolent experimenter discerning whether participant $j$ contributed data to the clustering algorithm is $O(dk/N\epsilon)$.  It then follows that if a total probability of $1/2+\delta$ for the eavesdropper identifying whether the user partook in the algorithm can be tolerated then $R$ rounds of clustering can be carried out if $N \in \Omega\left( \frac{Rdk}{\epsilon\delta}\right)$.  This shows that if $N$ is sufficiently large then this protocol for $k$--means clustering can be carried out without compromising the privacy of any of the participants.

An important drawback of the above approach is that the scheme does not protect the model learned from the users.  Such protocols could be enabled by only having the experimenter reveal hypothetical cluster centroids and then from the results infer the most likely cluster centroids, however this diverges from the $k$-means approach to clustering and will likely need a larger value of $N$ to guarantee privacy given that the information from each round is unlikely to be as useful as it is in $k$--means clustering.

Also, while the scheme is private it is not secure.  This can easily be seen from the fact that an eavesdropper could intercept a qubit and apply a random phase to it.  Because the protocol assumes that the sum of the phases from each participant adds up to at most $1$, this can ruin the information sent.  While this approach is not secure against an all-powerful adversary, the impact that individual malfeasant participants could have in the protocol can be mitigated.  One natural way is to divide the $N$ participants into a constant number of smaller groups, and compute the median of the cluster centroids returned.  While such strategies will be successful at thwarting a constant number of such attacks, finding more general secure and private quantum methods for clustering data remains an important goal for future work.

\section{Conclusion}
We have surveyed here the impacts that quantum computing can have on security and privacy of quantum machine learning algorithms.  We have shown that robust versions of quantum principal component analysis exist that retain the exponential speedups observed for its non-robust analogue (modulo certain assumptions about data access and output).  We also show how bootstrap aggregation or boosting can be performed on a quantum computer and show that it is quadratically easier to generate statistics over a large number of weak classifiers using these methods.  Finally, we show that quantum methods can be used to perform a private form of $k$--means clustering wherein no eavesdropper can determine with high-probability that any participant contributed data, let alone learn that data.

These results show that quantum technologies hold promise for helping secure machine learning and artificial intelligences.  Going forward, providing a better understanding of the impacts that technologies such as blind quantum computing~\cite{arrighi2006blind,broadbent2009universal} may have for both securing quantum machine learning algorithms as well as blinding the underlying data sets used from any adversary.  Another important question is whether tools from quantum information theory could be used to bound the maximum information about the models used by quantum classifiers, such as quantum Boltzmann machines or quantum PCA, that adversaries can learn by querying a quantum classifier that is made publicly available in the cloud.  Given the important contributions that quantum information theory has made for our understanding of privacy and security in a quantum world, we have every confidence that these same lessons will one day have an equal impact on the security and privacy of machine learning.

There are of course many open questions left in this field and we have only attempted to give a taste here of the sorts of questions that can be raised when one looks at quantum machine learning in adversarial settings.  One important question surrounds the problem of generating adversarial examples for quantum classifiers.  Due to the unitarity of quantum computing, many classifiers can be easily inverted from the output quantum states.  This allows adversarial examples to be easily created that would generate false positives (or negatives) when exposed to the classifier.  The question of whether quantum computing offers significant countermeasures against such attacks remains an open problem.  Similarly, gaining a deeper understanding of the limitations that sample lower bounds on state/process imply on stealing quantum models imply for the security of quantum machine learning solutions could be an interesting question for further reasearch.  Understanding how quantum computing can provide ways of making machine learning solutions more robust to adversarial noise of course brings more than security, it helps us understand how to train quantum computers to understand concepts in a robust fashion, similar to how we understand concepts.  Thus thinking about such issues may help us address what is perhaps the biggest open question in quantum machine learning: ``Can we train a quantum computer to think like we think?''

\appendix
\section{Proof of Theorem 1 and Robust LCU Simulation}

Our goal in this approach to robust PCA is to examine the eigendecomposition of a vector in terms of the principal components of $\mathbf{M}$.  This has a major advantage over standard PCA in that it is manifestly robust to outliers.  We will see that, within this oracle query model, quantum approaches can provide great advantages for robust PCA if we assume that $\mathbf{M}$ is sparse. 

Now that we have an algorithm for coherently computing the median, we can apply this subroutine to compute the components of $\mathbf{M}$.  First we need a method to be able to compute a representation of $e_k^T x_j$ for any $k$ and $j$.  

\begin{lemma}
For any $k$ and $j$ define $\ket{\Psi_{j,k}}$ such that, for integer $y$, if $\braket{y}{\Psi_{j,k}}\ne 0$ then $| y- e_k^T v_j|\le \epsilon$ there exists a coherent quantum algorithm that maps, up to isometries, $\ket{j,k}\ket{0} \mapsto \ket{j,k} (\sqrt{1-\chi_{j,k}} \ket{\Psi_{j,k}} + \sqrt{\chi_{j,k}} \ket{\Psi_{j,k}^\perp})$ for $0\le \chi_{j,k} \le \Delta\le 1$ that uses $O(\log(1/\Delta)/\epsilon)$ queries to controlled $U$.\label{lem:ip}
\end{lemma}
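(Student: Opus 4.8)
The plan is to observe that $e_k^{T} v_j = \braket{k}{v_j}$ is exactly the amplitude of the computational basis state $\ket{k}$ in the state $\ket{v_j}$ prepared by $U$, and to estimate this real number coherently to additive error $\epsilon$ by combining a Hadamard test (to recover its sign, using that $v_j$ is real) with amplitude estimation and a coherent median amplification step (to drive the failure amplitude below $\Delta$).

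First I would construct, controlled on the index register $\ket{j,k}$, a state-preparation unitary $A$ acting on one ancilla qubit and a work register: Hadamard the ancilla; controlled on the ancilla being $\ket{0}$ copy $k$ from the index register into the (initially $\ket{0}$) work register; controlled on the ancilla being $\ket{1}$ apply $U$ to the pair consisting of the $j$-part of the index register and the work register, producing $\ket{v_j}$ on the work register; then Hadamard the ancilla again. Because each $v_j$ is a real unit vector, a short calculation gives that the probability the ancilla reads $0$ is $p_{j,k} = \tfrac12\bigl(1 + e_k^{T} v_j\bigr) \in [0,1]$, and $A$ uses $O(1)$ controlled-$U$ calls.

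Next I would run amplitude estimation on $A$, made coherent by simply not measuring the phase register, to obtain a unitary $\mathcal{A}$ with $\ket{j,k}\ket{0} \mapsto \ket{j,k}\bigl(\sum_{m} c_m \ket{\tilde{p}_m}\ket{g_m}\bigr)$ such that the amplitudes with $|\tilde{p}_m - p_{j,k}| \le \epsilon/2$ carry squared weight at least $8/\pi^2$, using $O(1/\epsilon)$ applications of $A$ and $A^{\dagger}$ (hence $O(1/\epsilon)$ controlled-$U$ queries); composing with the affine post-processing $\tilde p \mapsto 2\tilde p - 1$ turns this into an estimate of $e_k^{T} v_j$ accurate to $\epsilon$ on the good branch. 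I would then amplify the success amplitude by running $\mathcal{A}$ $r$ times in parallel on fresh registers and coherently computing the median of the $r$ estimates into an output register. A Hoeffding bound on the output amplitudes — the squared amplitude on median values more than $\epsilon$ from $e_k^{T} v_j$ is at most the probability that at least $r/2$ of the independent estimates are bad, which is $e^{-\Omega(r)}$ — shows this squared amplitude is $e^{-\Omega(r)}$; choosing $r = O(\log(1/\Delta))$ makes it at most $\Delta$. Defining $\ket{\Psi_{j,k}}$ to be the normalized good branch, whose output register is supported only on $y$ with $|y - e_k^{T} v_j| \le \epsilon$, and $\ket{\Psi_{j,k}^{\perp}}$ to be its orthogonal complement, and folding the ancilla and garbage registers into the ``up to isometries'' clause, produces exactly the stated map with $\chi_{j,k} \le \Delta$. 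The query count is $r \cdot O(1/\epsilon) = O(\log(1/\Delta)/\epsilon)$.

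The main obstacle I anticipate is purely the coherent bookkeeping: one must check that $A$ is a legitimate unitary to reflect about $\ket{0}$ inside the Grover iterate of amplitude estimation; that the ``at least $8/\pi^2$ of the weight is within $\epsilon/2$'' guarantee is a statement about the amplitudes of $\mathcal{A}$'s output superposition rather than merely about measurement statistics, so that the subsequent median step can be analyzed as a Hoeffding bound over a product distribution on those amplitudes; and that the residual garbage registers $\ket{g_m}$ from each amplitude-estimation run are harmless, which is exactly what the ``up to isometries'' allowance buys. For general (complex) $v_j$ the construction is unchanged except that the Hadamard test is run once with and once without an $S^{\dagger}$ phase inserted on the ancilla, extracting the real and imaginary parts of $\braket{k}{v_j}$ at a cost of at most a factor of two.
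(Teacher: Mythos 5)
Your proposal is correct and follows essentially the same route as the paper: build a Hadamard-test-style circuit using one controlled-$U$ call whose acceptance probability is $(1+e_k^T v_j)/2$, run amplitude estimation coherently to precision $O(\epsilon)$ at cost $O(1/\epsilon)$, and drive the bad-branch amplitude below $\Delta$ by a parallel repetition with a coherent median at cost $O(\log(1/\Delta))$, giving $O(\log(1/\Delta)/\epsilon)$ controlled-$U$ queries overall. The paper simply invokes the ``coherent amplitude estimation'' primitive of~\cite{WKS15} as a black box rather than re-deriving the median/Hoeffding amplification step, and its circuit realizes the overlap test by XOR-ing $k$ into the $\ket{v_j}$ register rather than branching between $\ket{k}$ and $\ket{v_j}$, but these are cosmetic differences.
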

\begin{proof}
First let us assume that all the $v_j$ are not unit vectors.  Under such circumstances we can use~\lem{isometry} to embed these vectors as unit vectors in a high dimensional space.  Therefore we can assume without loss of generality that $v_j$ are all unit vectors.

Since $v_j$ can be taken to be a unit vector, we can use quantum approaches to compute the inner product.  Specifically, the following circuit can estimate the the inner product $\braket{v_j}{k}$.
\[
    \Qcircuit @C=1em @R=1em {
        \lstick{\ket{0}}    & \gate{H}     	& \ctrl{1}   			&\ctrl{2}& \gate{H} & \qw \\
        \lstick{\ket{j}} & {/} \qw                    		 & \multigate{1}{U} 	&\qw& \qw      & \qw   \\
        \lstick{\ket{0}} & {/} \qw                   		 & \ghost{U} 		&\targ& \qw      & \qw    \\
        \lstick{\ket{k}} & {/} \qw                      		& \qw 			&\ctrl{-1}& \qw      & \qw    
    }
\]
The circuit implements the Hadamard test on the unitary $U$, which prepares the state $\ket{v_j}$ and computes the bitwise exclusive or on the resultant vector and the basis vector $k$. 
Specifically, the probability of measuring the top-most qubit to be $0$ is
\begin{equation}
P(0|j,k) = \frac{1+ \braket{k}{v_j}}{2},
\end{equation}
where recall that we have assumed that the training vectors satisfy $\ket{v_j} \in \mathbb{R}$.  To see this, note that the circuit performs the following transformation
\begin{align}
\ket{0}\ket{j}\ket{0}\ket{k} &\mapsto (H\otimes \openone)\frac{1}{\sqrt{2}} \left(\ket{0}\ket{j}\ket{0}\ket{k} + \ket{1}\ket{j}\ket{v_j\oplus k} \ket{k} \right) \nonumber\\
&=\frac{1}{\sqrt{2}} \left(\frac{1}{\sqrt{2}} \ket{0}\ket{j}\left(\ket{0} + \ket{v_j\oplus k} \right)\ket{k} \right)+ \frac{1}{\sqrt{2}} \ket{1}\ket{\phi^\perp}:=\ket{\psi},
\end{align}
for some state vector $\ket{\phi^\perp}$.  The probability of measuring the first qubit to be $0$ is ${\rm Tr}( \ketbra{\psi}{\psi} (\ketbra{0}{0}\otimes \openone))$ which after some elementary simplifications is 
\begin{equation}
\left(\frac{\bra{0}+\bra{v_j\oplus k}}{{2}} \right)\left(\frac{\ket{0}+\ket{v_j\oplus k}}{{2}} \right) = \frac{1+\braket{0}{v_j\oplus k}}{2}=\frac{1+\braket{k}{v_j}}{2},
\end{equation}
given that $v_j$ is real valued.

Following the argument for coherent amplitude estimation in~\cite{WKS15} we can use this process to produce a circuit that maps $\ket{j,k}\ket{0} \mapsto \ket{j,k} (\sqrt{1-\chi_{j,k}} \ket{\tilde{\Psi}_{j,k}} + \sqrt{\chi_{j,k}} \ket{\tilde{\Psi}_{j,k}^\perp}$ for $0\le \chi_{j,k} \le \Delta/2\le 1$ using $O(\log(1/\Delta)/\epsilon)$ invocations of the above circuit where, for integer $y$, $\braket{y}{\tilde{\Psi}}\ne 0$ if $|y- (1+e_k^T v_j)/2|\le \epsilon/2$.  Each query requires $1$ invocation of controlled $U$ and hence the query complexity is $O(\log(1/\Delta)/\epsilon)$.  By subtracting off $1/2$ from the resulting bit string and multiplying the result by $2$ we obtain the desired answer.  The result is computing to precision $\epsilon$ because we demanded that $(1+e_k^T v_j)/2$ is computed with accuracy $\epsilon/2$.  The arithmetic requires no additional queries, so the overall query complexity is $O(\log(1/\Delta)/\epsilon)$ as claimed.  Note that in this case there are several junk qubit registers that are invoked by this algorithm that necessarily extend the space beyond that claimed in the lemma statement; however, since we only require the result to be true up to isometries we neglect such registers above for simplicity.
\end{proof}

\begin{theorem}
Let $U$ be a unitary operation such $U\ket{0}\ket{j}\ket{0} = \ket{0}\ket{j}\ket{0}$ and $U\ket{1}\ket{j}\ket{0} = \ket{1}\ket{j}\ket{x_j}$.  
Let $Q:[0,1] \mapsto [-1,1]$ be an inverse cummulative distribution function for the $x_j$ and assume that $Q$ is Lipschitz with constant  $L>0$.
For any $\delta>0$ a unitary $\mathcal{V}$ can be implemented such that 
$
\mathcal{V}\ket{j}\ket{k}\ket{0} = \ket{j}\ket{k}\left(\sqrt{1-\delta}\ket{\mathcal{P}} + \sqrt{\delta} \ket{\mathcal{P}^\perp}\right).
$
Here $\ket{\mathcal{P}}$ be a quantum state on a Hilbert space $\mathcal{H}_{\rm junk} \otimes \mathcal{H}_{N_v}$ such that
${\rm Tr} (\ketbra{y}{y}{\rm Tr_{junk}} (\ketbra{\mathcal{P}}{\mathcal{P}})) \ne 0$ if and only if $|P(\braket{x_k}{x_j} <y)- 1/2|\le \epsilon<1/4$.  Furthermore, this state can be prepared using a number of queries to $U$ that scales as $$O(\epsilon^{-2} \log(1/\epsilon)\log(L/\epsilon) \log(\log(1/\epsilon)/\delta)).$$\label{thm:PCAmain}
\end{theorem}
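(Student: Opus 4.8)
The plan is to assemble $\mathcal{V}$ from two coherent sub-routines: one that coherently computes the overlaps $\braket{x_k}{x_j}$, and one that coherently takes the median of the resulting distribution over $j$. For the first I would invoke~\lem{ip} (or, equivalently, the Hadamard-test construction used in its proof) to obtain a unitary $\mathcal{A}$ such that $\mathcal{A}\ket{j}\ket{k}\ket{0}=\ket{j}\ket{k}\bigl(\sqrt{1-\chi_{j,k}}\,\ket{\Psi_{j,k}}+\sqrt{\chi_{j,k}}\,\ket{\Psi_{j,k}^{\perp}}\bigr)$, where every basis string in the support of $\ket{\Psi_{j,k}}$ lies within $\epsilon'$ of $\braket{x_k}{x_j}$ and $\chi_{j,k}\le\Delta'$, at a cost of $O(\log(1/\Delta')/\epsilon')$ queries to $U$; because the $x_j$ are real the Hadamard test returns the \emph{signed} overlap, so no squaring is incurred.

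Given $\mathcal{A}$ I would build a coherent estimator of the empirical cumulative distribution function: prepare $\tfrac{1}{\sqrt{N}}\sum_{j}\ket{j}$, apply $\mathcal{A}$ with the fixed index $k$ to write an $\epsilon'$-approximation of $\braket{x_k}{x_j}$ into an ancilla, apply a reversible comparator against a threshold register $\ket{y}$ to set a flag qubit, and run the coherent amplitude-estimation routine of~\cite{WKS15} on that flag. Up to the $\chi_{j,k}$ branch the flag reads $1$ with probability $\widehat{P}(y)=P_{j}(\text{rounded }\braket{x_k}{x_j}<y)$, which differs from $P(\braket{x_k}{x_j}<y)$ by an amount controlled by $\epsilon'$ and the Lipschitz constant $L$ of $Q$ --- the same quantile-stability mechanism used in~\lem{perturb}, now applied to the distribution of rounded overlaps. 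Amplitude estimation deposits an $O(\epsilon)$-accurate estimate of $\widehat{P}(y)$ using $O(1/\epsilon)$ invocations of $\mathcal{A}$. I would then locate the median by a coherent binary search over a grid of $O(L/\epsilon)$ thresholds $y\in[-1,1]$, the grid resolution fixed through the Lipschitz bound on $Q$ so that at least one grid point is an $\epsilon$-approximate median and no grid point lying outside the $\epsilon$-approximate-median set survives in the output. Each of the $O(\log(L/\epsilon))$ search steps calls one CDF estimation; multiplying the three layers, and taking $\epsilon'=\Theta(\epsilon)$ with $\Delta'$ small enough (chosen as below), yields the claimed $O\bigl(\epsilon^{-2}\log(1/\epsilon)\log(L/\epsilon)\log(\log(1/\epsilon)/\delta)\bigr)$ queries.

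The subtle part is turning this pipeline into the stated unitary: everything must run coherently, with no intermediate measurement, so that the failure branches of amplitude estimation, the $\chi_{j,k}$ branches of $\mathcal{A}$, and all comparator garbage are collected into $\mathcal{H}_{\rm junk}$ and into the $\sqrt{\delta}\,\ket{\mathcal{P}^{\perp}}$ component, leaving a $\ket{\mathcal{P}}$ whose $\ket{y}$-marginal is supported exactly on the $\epsilon$-approximate-median set. This is the origin of the $\log(\log(1/\epsilon)/\delta)$ factor: $\Delta'$ and the amplitude-estimation tail weights must be pushed down to $\mathrm{poly}(\delta,\epsilon,1/\log(1/\epsilon))$ so that, summed in amplitude over the $O(\epsilon^{-1}\log(L/\epsilon))$ uses of $\mathcal{A}$ and the $O(\log(L/\epsilon))$ search rounds, the weight on bad branches stays below $\sqrt{\delta}$. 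I expect the main obstacle to be precisely this error analysis, in three pieces: (a) that rounding the overlaps to precision $\epsilon'$ shifts $P(\braket{x_k}{x_j}<y)$ by at most $O(\epsilon)$, which needs the Lipschitz hypothesis on $Q$, argued as in~\lem{perturb} but for the rounded empirical distribution; (b) that the coherent binary search is well defined despite its CDF estimates being only $O(\epsilon)$-accurate, i.e.\ that its ``left / right / accept'' branching is consistent from one grid cell to the next, which again relies on $Q$ being Lipschitz to forbid a pathologically flat CDF at the relevant scale; and (c) that the non-unitary weights of the three layers compose additively in amplitude, so the overall $\delta$ budget is met. The remaining ingredients --- the Hadamard-test identity, the reversible comparator, and the query accounting --- are routine.
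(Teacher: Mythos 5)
Your proposal is essentially the same as the paper's: both use \lem{ip} (the Hadamard-test overlap circuit) wrapped inside the coherent amplitude-estimation routine of~\cite{WKS15} to obtain a coherent estimate of the empirical CDF $P(\braket{x_k}{x_j}<y)$, then locate the median by a coherent binary search, invoking the Lipschitz hypothesis on the inverse CDF to bound the error propagation --- the same skeleton and the same three-factor complexity bound.

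The one place you diverge in implementation is the binary search itself. You propose to pre-discretize $[-1,1]$ into a fixed grid of $O(L/\epsilon)$ thresholds and binary-search that grid in $O(\log(L/\epsilon))$ rounds. The paper instead maintains an interval $[L_p,R_p]$ that it halves adaptively for $O(\log(1/\epsilon))$ rounds, and after each round \emph{inflates} the surviving endpoint by $\epsilon' + L\epsilon_0$ to account for the amplitude-estimation error before halving again; the recurrence $\epsilon_t(p{+}1)\le(\epsilon_t(p)+L\epsilon_0+\epsilon')/2$ then gives the $O(\log(1/\epsilon))$ round count directly, and the endpoint inflation is precisely the mechanism that answers your worry (b) --- it guarantees the true median stays inside the shrinking interval even when individual CDF estimates err by $O(\epsilon_0)$. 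In your grid scheme, the accounting of the three logarithmic factors is shuffled: your $\log(L/\epsilon)$ comes from the number of grid rounds, forcing $\log(1/\epsilon)$ to come from the failure probability $\Delta'$ of \lem{ip}, whereas the paper takes $\log(1/\epsilon)$ from the round count and $\log(L/\epsilon)$ from the AE precision $\epsilon_0=\Theta(\epsilon'/L)$. Both shufflings land on the same product, but you will want to make the choice of $\Delta'$ explicit (something like $\Delta'=\Theta(\epsilon)$ suffices, with the AE tail budgeted as $\delta_0\in O(\delta/\log(1/\epsilon))$ via a union bound over rounds, which is exactly what the paper does to extract the $\log(\log(1/\epsilon)/\delta)$ factor) and verify that your grid resolution is fine enough that the $\epsilon$-approximate-median set contains at least one grid point --- the paper's interval-inflation trick sidesteps this by never committing to a grid. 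The rest of your outline (reversible comparator, coherent garbage collection into $\mathcal{H}_{\rm junk}$ and the $\sqrt{\delta}\ket{\mathcal{P}^\perp}$ branch, the remark that no squaring is incurred because the data are real) matches the paper's reasoning.
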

\begin{proof}
Similar to the proof of Grover's algorithm for the median~\cite{grover1996fast} and that of Nayak and Wu~\cite{nayak1999quantum}, our approach reduces to coherently applying binary search.  Their algorithms cannot be directly used here because they utilize measurement, which prevents their use within a coherent amplitude estimation algorithm.  Furthermore, the algorithms of~\cite{nayak1999quantum} solve a harder problem, namely that of outputting an approximate median from the list rather than simply providing a value that approximately seperates the data into two equal length halves.  This value, however, need not actually be contained in either list unlike the algorithm of~\cite{nayak1999quantum}.  For this reason we propose a slight variation on Nayak and Wu's ideas here.

Consider the following algorithm for some value of $\epsilon'>0$
\begin{enumerate}
\item Prepare the state $\ket{L_1}\ket{R_1}$ such that $L_1$ corresponds to $0$ and $R_1$ corresponds to $1$.
\item Repeat steps 3 through 8 for $p=1$ to $p=p_{\max}\in O(\log(1/\epsilon))$
\item $\ket{L_p}\ket{R_p}\rightarrow \frac{1}{\sqrt{N}}\sum_{j=1}^N \ket{j}\ket{L_p}\ket{R_p}\ket{\mu_p}\ket{0}\ket{0}$ for $\mu_p = ([L_p +R_p]/2)$.
\item Repeat the following step within a coherent amplitude estimation with error $\epsilon_0/2$ and error probability $\delta_0$ on the fifth register using a projector that marks all states in that register, $\ket{L_p}$, such that $L_p< \mu_p$ and store the probabilities in the sixth register.
\item Apply coherent amplitude estimation on the inner product circuit in Lemma $2$ to prepare state
$$ \frac{1}{\sqrt{N}}  \sum_{j=1}^N \ket{j}\ket{L_p}\ket{R_p} \ket{\mu_p}\left(\sqrt{1-\delta_p(j,k)} \ket{\psi_{p}(j,k)} +\sqrt{\delta_p(j,k)}\ket{\psi_p^\perp(j,k)} \right)\ket{0}, $$
where $ \braket{\nu}{\psi_{p}(j,k)}\ne 0$ if $|\nu-|\braket{x_k}{x_j}|^2| \le \epsilon'$ and $\delta_{p}(j,k) \le \epsilon_0/2$.
\item Use a reversible circuit to set, conditioned on the probability computed in the above steps is less than $1/2$, $\ket{L_{p+1}} \gets \ket{\mu_p}$ and $\ket{R_{p+1}} \gets \ket{R_p}$.
\item Use a reversible circuit to set, conditioned on the probability computed in the above steps is greater than $1/2$, $\ket{R_{p+1}} \gets \ket{\mu_p}$ and $\ket{L_{p+1}} \gets \ket{L_p}$.
\item Use a reversible circuit to set, conditioned on whether $L$ or $R$ was updated $\ket{L_{p+1}} \mapsto \ket{L_{p+1} - (\epsilon + L \epsilon_0)}$ or  $\ket{R_{p+1}} \mapsto \ket{R_{p+1} + (\epsilon + L \epsilon_0)}$
\item Return final state wherein the register $\ket{\mu_{p_{\max}+1}}$ contains the estimate of the median.

\end{enumerate}
Let us focus on a single update step in the outer loop, which constitutes estimation of the inner products and computation of a median.  \lem{ip} shows that there exists a unitary process, that performs (up to isometries) 
\begin{equation}
O_{ip}:\ket{j,k}\ket{0} \mapsto \ket{j,k} \left(\sqrt{1-\chi_{j,k}} \ket{\psi_{j,k}} + \sqrt{\chi_{j,k}} \ket{\psi_{j,k}^\perp} \right).
\end{equation}
Now define an operator $O_\mu$ such that for integers $\mu$ and $y$
\begin{equation}
O_\mu \ket{y}\ket{\mu} = \begin{cases}\ket{y}\ket{\mu} &, \text{ if } \mu>y \\ -\ket{y}\ket{\mu} &, \text{ otherwise }  \end{cases}.
\end{equation}
This oracle serves to mark all states that have support on values less than or equal to $y$.  We can then express the state $O_{ip}\ket{j,k}\ket{0}$ as
\begin{equation}
\ket{j,k} \left(\sqrt{1-\chi_{j,k}} \left(\sqrt{P_{<}'}\ket{\Psi_{j,k}^<} + \sqrt{1-P_<'} \ket{{\Psi_{j,k}^>}} \right)+\sqrt{L_{j,k}}\ket{\phi_{j,k}^\perp} \right):=\ket{j,k}\ket{\Upsilon_{j,k}},
\end{equation}
where $O_{\mu} \ket{\Psi_{j,k}^<}\ket{\mu} = -\ket{\Psi_{j,k}^<}\ket{\mu}$, $O_\mu \ket{\Psi_{j,k}^>}\ket{\mu} = \ket{\Psi_{j,k}^>}\ket{\mu}$ and $ \ket{\phi_{j,k}^\perp}$ is not necessarily an eigenvector of $O_{\mu}$ and we formally express $O_{\mu}\ket{\phi_{j,k}^\perp}\ket{\mu} = \ket{\tilde \phi_{j,k}^\perp}\ket{\mu}$.  If we apply coherent amplitude estimation using the oracles $O_{\mu}$ and $O_{ip}$ with error at most $\epsilon_0/2$ and failure probability at most $\delta_0\ge \Delta_{j,k}$ we obtain a state of the form
\begin{equation}
\ket{j,k} \left(\sqrt{1-\Delta_{j,k}} \ket{P_{\rm good}^{j,k}} + \sqrt{\Delta_{j,k}} \ket{P_{\rm bad}^{j,k}}\right)\ket{\mu},
\end{equation}
where $\braket{P_{\rm good}^{j,k}}{x}\ne 0$ only if $$\bra{\Upsilon_{j,k}}\bra{\mu} [(\openone -O_\mu)/2]\ket{\Upsilon_{j,k}}\ket{\mu} =  |(1-\chi_{j,k})P_{<}' + \chi_{j,k} \|(\openone - O_{\mu})/2 \ket{\tilde \phi_{j,k}^\perp}\| -x| \le \epsilon_0/2.$$  This is equivalent to saying that the estimate of the probability output by the process, in the success branch of the wave function, matches the exact probability within error at most $\epsilon_0/2$.

Now we further have from the fact that $\chi_{j,k} \le \epsilon_0/2$ that
\begin{align}
|P_<' - x| &\le |P_<' - (1-\chi_{j,k})P_{<}' + \chi_{j,k} \|(\openone - O_{\mu})/2 \ket{\tilde \phi_{j,k}^\perp}\|| + |(1-\chi_{j,k})P_{<}' + \chi_{j,k} \|(\openone - O_{\mu})/2 \ket{\tilde \phi_{j,k}^\perp}\| -x|\nonumber\\
&\le |\chi_{j,k}P_{<}' - L_{j,k} \|(\openone - O_{\mu})/2 \ket{\tilde \phi_{j,k}^\perp}\|| +\epsilon_0/2 \le \epsilon_0.
\end{align}
Thus the errors in the probability estimates that the median, or more generally any percentile, is based on are at most $\epsilon_0$ with the assumed level of precision.

Now without loss of generality, let us assume that the right end point is updated in a given step of the protocol.  Let $R$ be the value of the right-endpoint in the event that there was zero error in the estimate of the probability and let $\tilde{R}$ be the approximation to the right end point that arises from errors in the median probability and also from estimation of the inner products.  If we then define $\tilde{R}_p$ to be the error in the estimate of $R$ that arises when we only consider errors from the estimate of the median probability we see that
\begin{equation}
|R-\tilde{R}| \le |R-\tilde{R}_p| +|\tilde{R}_p - \tilde{R}|.
\end{equation}
The maximum deviation that can occur in the $p^{\rm th}$ percentile from perturbing each element of a list by $\epsilon'$ is $\epsilon'$.  Similarly, since $Q$ is Lipshitz with constant $L$ we have that $|R-\tilde{R}_p| \le L \epsilon_0$.  Thus
\begin{equation}
|R-\tilde{R}| \le \epsilon'+L \epsilon_0.
\end{equation}
Note that in many cases, the error in the right end point will be zero because the uncertainty in the probability estimation will be less than the gap between the percentiles at $R$ and $L$; however, we will track this propagation of uncertainty in each step for simplicity.
Thus we have that if $\epsilon_t(p)$ is the total uncertainty in the median at step $p$ of the binary search procoess we have that, in the worst case scenario that
\begin{equation}
\epsilon_t(p+1) \le \frac{\epsilon_t(p)+L \epsilon_0+\epsilon'}{2}.
\end{equation}
Solving this recurrence relation with $\epsilon_t(0)=1/2$ yields,
\begin{equation}
\epsilon_t(n) \le 2^{-n-1} + (\epsilon'+L\epsilon_0)(1-2^{-n}).
\end{equation}
Thus if we take $\epsilon_0 = \epsilon'/L$ and we find that if we want the total uncertainty to be $\epsilon$ at the end of the protocol then it suffices to pick

$$n=\log_2 \left(\frac{1-4\epsilon}{2(\epsilon-4\epsilon')} \right).$$
If $\epsilon>4\epsilon'$ and $\epsilon < 1/4$ then it is clear that it suffices to take $p_{\max}\in O(\log(1/\epsilon))$ in order to obtain an uncertainty in the mean of $\epsilon$ at the end of the binary search.  We assume in the theorem statement that $\epsilon<1/4$ and choose $\epsilon' < \epsilon/4$.  This validates the claim that the binary search process need only be repeated $O(\log(1/\epsilon))$ times.

We then have that the overall query complexity of the algorithm is the product of the query complexities of performing the inner product calculation, calculating the median and the number of repetitions needed in binary search.  The final complexity, $C$, therefore scales from~\lem{ip} and from bounds on the query complexity of coherent amplitude estimation~\cite{WKS15} as
\begin{equation}
C\in O\left(\frac{\log(1/\epsilon)\log(1/\epsilon_0)\log(1/\delta_0)}{\epsilon^2} \right)\in O\left(\frac{\log(1/\epsilon)\log(L/\epsilon)\log(\log(1/\epsilon)/\delta)}{\epsilon^2} \right).
\end{equation}
Here we have used the fact that the success probability for the overall algorithm is at least $1-\delta$ therefore we need to pick $\delta_0 \in O(\delta/\log(1/\epsilon))$ since there are $O(\log(1/\epsilon))$ distinct rounds and the failure probability grows at most additively from the union bound.

Finally, note that in none of the steps in the algorithm has any garbage collection of ancillary qubits been performed.  For this reason it is necessary to explicitly write the resultant state as
$$
\ket{j}\ket{k}\left(\sqrt{1-\delta}\ket{\mathcal{P}} + \sqrt{\delta} \ket{\mathcal{P}^\perp}\right),
$$
where $\ket{\mathcal{P}}$ be a quantum state on a Hilbert space $\mathcal{H}_{\rm junk} \otimes \mathcal{H}_{N_v}$ such that
${\rm Tr} (\ketbra{y}{y}{\rm Tr_{junk}} (\ketbra{\mathcal{P}}{\mathcal{P}})) \ne 0$ if and only if $|P(\braket{x_k}{x_j} <y)- 1/2|\le \epsilon$.
\end{proof}

\begin{corollary}
Given the assumptions of ~\thm{PCAmain} hold with $L\in \Theta(1)$, an oracle that provides $\gamma$ approximate queries to $\mathbf{M}_{k,\ell}$ can be implemented in the form of a unitary that maps for $1\ge\delta\ge \delta_{k,\ell}\ge 0$
$$
V_{\mathbf{M}}:\ket{k,\ell}\ket{0} \mapsto 
\ket{k,\ell}\sqrt{1-\delta_{k,\ell}}\ket{\mathcal{M}_{k,\ell}} + \sqrt{\delta_{k,\ell}} \ket{\mathcal{M}_{k,\ell}^\perp},
$$
using a number of queries to $U$ that scales as $O(\gamma^{-2} \log^2(1/\gamma) \log(\log(1/\gamma)/\delta))$ where $\ket{\mathcal{M}_{k,\ell}}\in \mathbb{C}^{n+a}$ has projection only on states that encode of $\mathbf{M}_{k,\ell}$ up to $\gamma$--error meaning that for any computational basis state ${\rm Tr} (\ketbra{y}{y}{\rm Tr_{junk}} (\ketbra{\mathcal{M}_{k,\ell}}{\mathcal{M}_{k,\ell}})) \ne 0$  if $|y-\mathbf{M}_{k,\ell}|> \gamma$.\label{cor:covbd}
\end{corollary}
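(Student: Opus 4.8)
The plan is to realize $V_{\mathbf M}$ by chaining together three invocations of the binary-search median-finding construction that underlies \thm{PCAmain}, glued by coherent arithmetic. Recall from \defn{N_v} that $\mathbf M_{k,\ell}={\rm median}_j\left(\left[e_k^Tx_j-m_k\right]\left[e_\ell^Tx_j-m_\ell\right]\right)$ where $m_k:={\rm median}_j(e_k^Tx_j)$, so we need one ``inner'' median for each of the two coordinates and one ``outer'' median of the centred products. First I would invoke \thm{PCAmain} twice, but with the inner-product subroutine of \lem{ip} applied to the computational basis vectors $e_k$ and $e_\ell$ (so it returns $e_k^Tx_j$ rather than an inner product of two data vectors), producing coherent unitaries that write approximations $\tilde m_k,\tilde m_\ell$ of $m_k,m_\ell$ into ancilla registers to accuracy $\gamma'=\Theta(\gamma)$, each with a failure branch of amplitude $O(\sqrt{\delta})$. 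With $L\in\Theta(1)$, \thm{PCAmain} gives $O(\gamma^{-2}\log^2(1/\gamma)\log(\log(1/\gamma)/\delta))$ queries to $U$ for each.

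Next I would assemble a modified function oracle $\tilde O_{ip}$ that, on $\ket{j}$ together with the registers carrying $\tilde m_k,\tilde m_\ell$, calls \lem{ip} twice to get $\gamma''$-accurate estimates of $e_k^Tx_j$ and $e_\ell^Tx_j$ and then uses reversible arithmetic to form $g_j:=[e_k^Tx_j-\tilde m_k][e_\ell^Tx_j-\tilde m_\ell]$ to accuracy $\Theta(\gamma'')$, with a small junk branch. Since all $x_j$ are unit vectors, $|e_k^Tx_j|,|e_\ell^Tx_j|\le1$, so each factor lies in $[-2,2]$ and the arithmetic carries no large constants. Plugging $\tilde O_{ip}$ into the outer binary search of \thm{PCAmain} — the marking oracle $O_\mu$ now applied to the $g_j$ register, and coherent amplitude estimation estimating the fraction of $j$ with $g_j$ below the current bisection point — produces a unitary that writes a $\gamma$-accurate estimate of ${\rm median}_j(g_j)=\mathbf M_{k,\ell}$ into an output register. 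Sweeping the bisection endpoints, amplitude-estimation registers, stored medians and inner-product junk into the $a$-qubit junk space yields precisely $\ket{k,\ell}\ket0\mapsto\ket{k,\ell}\left(\sqrt{1-\delta_{k,\ell}}\ket{\mathcal M_{k,\ell}}+\sqrt{\delta_{k,\ell}}\ket{\mathcal M_{k,\ell}^\perp}\right)$. Because $\tilde O_{ip}$ costs only a constant factor more queries per evaluation than \lem{ip} did inside \thm{PCAmain}, the third stage again costs $O(\gamma^{-2}\log^2(1/\gamma)\log(\log(1/\gamma)/\delta))$, and summing the three stages gives the claimed bound.

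For correctness there are three error sources, two of them routine. (i) Replacing $m_k,m_\ell$ by $\tilde m_k,\tilde m_\ell$ perturbs each list element $g_j$ by at most $O\!\left(\gamma'(1+|e_k^Tx_j|)+\gamma'^2\right)=O(\gamma')$, and a percentile of a finite list is $1$-Lipschitz under a uniform perturbation of its entries, so $\mathbf M_{k,\ell}$ shifts by $O(\gamma')$; (ii) the $\Theta(\gamma'')$ error of $\tilde O_{ip}$ in each $g_j$ shifts the outer median by $O(\gamma'')$ by the same fact; (iii) the binary search itself contributes the $O(\epsilon'+L\epsilon_0)$-per-step error analysed in \thm{PCAmain}, summing to $O(\gamma)$ once $\epsilon',\epsilon_0$ are taken $\Theta(\gamma)$ and using that the inverse CDF of the $g_j$ distribution is Lipschitz with a $\Theta(1)$ constant (which we take to be part of the stated hypotheses). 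Taking $\gamma'=\gamma''=\Theta(\gamma)$ makes the total error $O(\gamma)$, i.e.\ the claimed $\gamma$ after rescaling constants. The various failure amplitudes (from the two preliminary medians, from $\tilde O_{ip}$, and from the final amplitude-estimation rounds) add at most linearly, so choosing each constituent failure probability $O(\delta/\log(1/\gamma))$ forces $\delta_{k,\ell}\le\delta$, exactly as in \thm{PCAmain}.

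The main obstacle I expect is not this error bookkeeping — which is essentially a replay of \thm{PCAmain} plus the elementary ``percentiles are $1$-Lipschitz in list perturbations'' observation — but preserving unitarity through the composition: the two centring medians must be generated as coherent registers rather than measured and then consumed by $\tilde O_{ip}$, and one must verify that the ``bad'' components introduced by coherent amplitude estimation, by \lem{ip}, and by the two preliminary medians accumulate only additively in amplitude, so that the net failure weight stays below $\delta$. A secondary point to check is that the distribution of the centred products $g_j$ inherits a Lipschitz inverse CDF, so that the outer binary search converges at the same rate; we absorb this into the standing assumption $L\in\Theta(1)$.
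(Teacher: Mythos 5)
Your proposal follows the same three-stage decomposition as the paper's own proof: two preliminary coherent median estimates for the centring terms $m_k,m_\ell$, followed by an outer median over the centred products, with the additive costs of all three stages giving the stated $O(\gamma^{-2}\log^2(1/\gamma)\log(\log(1/\gamma)/\delta))$ bound and error propagation controlled by the fact that the inner products are bounded by one. The paper's proof is much terser (a single paragraph asserting these steps), whereas your writeup spells out the coherent composition explicitly and flags a genuine subtlety the paper glosses over, namely that the Lipschitz-inverse-CDF hypothesis of~\thm{PCAmain} is stated for the coordinate values $e_k^T x_j$ and must also be assumed for the distribution of the centred products $g_j$ in order for the outer binary search to converge at the claimed rate.
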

\begin{proof}
The proof is a direct consequence of~\thm{PCAmain}.  The computation of the medians involve three steps.  First we need to compute the median for fixed $k$ and the median for fixed $\ell$ and then compute the median of the products of differences between these terms and the inner products.  From~\thm{PCAmain} the number of queries to the vectors needed to estimate this is in $O(\gamma^{-2} \log^2(1/\gamma) \log(\log(1/\gamma)/\delta))$ because the costs of both of these processes is additive.  Because the inner products are bounded above by $1$ the error from estimating the medians in the products of terms is $O(\gamma)$.  Thus the error from the computation of both medians sums to a value of $O(\gamma)$ if $O(\gamma^{-2} \log^2(1/\gamma) \log(\log(1/\gamma)/\delta))$ queries are used for each of the three procedures.
\end{proof}

Now that we have established the cost of implementing an approximate query to the matrix $\mathbf{M}$ the only step that remains is to diagonalize the matrix in order to be able to find the support of an input vector within a subspace.  For simplicity, we will assume in the following that the error tolerance for the median, $\epsilon$, required by the algorithms is constant. However, we will assume that the error tolerance required for the inner products, $\epsilon'$, is not a constant and will demand increased precision as the size of the problem grows.  For this reason, existing methods that show that the errors in the unitary implementation translate into error in the eigenphase do not necessarily hold (as they would for Trotter-based methods).   For this reason we need to explicitly demonstrate that phase estimation can be employed using this approach with finite error.

Before concluding with our main theorem on robust quantum PCA, there is one more technical lemma that needs to be proven.  We need to demonstrate that when using non-unitary simulation methods for $e^{-i\mathbf{M}}$ that the errors that arise from the truncated Taylor simulation method do not invalidate the phase estimation protocol.

\begin{lemma}
Let $\mathbf{M}\in \mathbb{C}^{2^n\times 2^n}$ and $\widetilde{\mathbf{M}}\in \mathbb{C}^{2^n\times 2^n}$ be $d$-sparse Hermitian matrices  and let $\mathbf{M}^{(j)}:j=1,\ldots, \kappa d^2$ be a set of one-sparse matrices such that each non-zero matrix element in $\mathbf{M}$ is assigned to precisely one $\mathbf{M}^{(j)}$ where $f(p,j)$ yields the non-zero matrix element in row $p$ of $\mathbf{M}^{(j)}$. If we then define $V$ to be the set of outcomes for which amplitude estimation estimates the desired matrix element of $\mathbf{Y}$ within error $\epsilon$ and \label{lem:Mbd}
\begin{itemize}
\item $F(j,m,p,x):= (-1)^{m(x M < m \|\mathbf{M}\|_{\max})} \ketbra{p}{f(p,j)}$.
\item $\frac{\mathbf{M}}{\|\mathbf{M}\|_{\max}}= \frac{1}{M}\sum_{j=1}^{\kappa d^2}\sum_{m=1}^M \sum_{p=1}^{2^n} F(j,m,p,\mathbf{M}_{p,f(p,j)})$,
\item $\frac{\widetilde{\mathbf{M}}}{\|\mathbf{M}\|_{\max}} =\frac{1}{M} \sum_{j=1}^{\kappa d^2} \sum_{m=1}^M \sum_{p=1}^{2^n} \left([1-\delta_{j,p}] \sum_{k\in V} {b_k} F(j,m,p,\mathbf{Y}_{p,f(p,j),k}) + {\delta_{j,p}} \sum_{k\not \in V} {c_k} F(j,m,p,\mathbf{Y}_{p,f(p,j),k})\right)$,
\end{itemize} 
where $0\le \delta_{j,p} \le \delta \le 1$, $\|c\|_1=\|b\|_1=1$ and for all $k\in V $, $|\mathbf{Y}_{p,f(p,j),k} - \mathbf{M}_{p,f(p,j)}|\le \eta$.  Under these assumptions $\|\mathbf{M} - \widetilde{\mathbf{M}}\|\in O\left(d[\|\mathbf{M}\|_{\max}\delta +\eta]\right)$.
\end{lemma}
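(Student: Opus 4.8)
The plan is to convert the operator-norm bound into an \emph{entrywise} bound on $\mathbf{M}-\widetilde{\mathbf{M}}$ and to invoke $d$-sparsity only at the very end. The one structural point — and the place a naive approach goes astray — is that one must \emph{not} bound $\|\mathbf{M}-\widetilde{\mathbf{M}}\|$ by summing the operator norms of the $\kappa d^2$ one-sparse summands that the decomposition into $F$'s superficially suggests, since that would cost an extra factor of $d$ and produce the wrong $O(d^2[\cdots])$ scaling; instead the estimate should be made directly on the whole $d$-sparse difference matrix.

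First I would isolate the only facts needed about the sign encoding. Writing $G(x):=\tfrac{1}{M}\sum_{m=1}^{M}(-1)^{m(xM<m\|\mathbf{M}\|_{\max})}$, the bulleted hypotheses read, entrywise and for the unique $j$ with $q=f(p,j)$, as $\mathbf{M}_{p,q}=\|\mathbf{M}\|_{\max}G(\mathbf{M}_{p,q})$ together with
$$\widetilde{\mathbf{M}}_{p,q}=\|\mathbf{M}\|_{\max}\Bigl[(1-\delta_{j,p})\!\sum_{k\in V}b_k\,G(\mathbf{Y}_{p,q,k})+\delta_{j,p}\!\sum_{k\notin V}c_k\,G(\mathbf{Y}_{p,q,k})\Bigr];$$
in particular both matrices are supported on the at-most-$d$-per-row designated positions, so $\mathbf{M}-\widetilde{\mathbf{M}}$ is Hermitian and $d$-sparse. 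An elementary count of how many of the $M$ signs equal $+1$ then shows that $G$ is a bounded, approximately $(1/\|\mathbf{M}\|_{\max})$-Lipschitz staircase encoding: one gets $\|\mathbf{M}\|_{\max}|G(y)|\le\|\mathbf{M}\|_{\max}$ for every real $y$ (clipping), and $\|\mathbf{M}\|_{\max}|G(y)-G(y')|\le|y-y'|+O(\|\mathbf{M}\|_{\max}/M)$, where the $O(\|\mathbf{M}\|_{\max}/M)$ discretisation error is controlled by the free precision parameter $M$ of the truncated-Taylor construction and which I suppress below.

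Next I would fix an entry $(p,q)$ and split along the probabilistic mixture. With weight $1-\delta_{j,p}$, $\widetilde{\mathbf{M}}_{p,q}$ replaces $\mathbf{M}_{p,q}$ by the convex combination $\sum_{k\in V}b_k\,\|\mathbf{M}\|_{\max}G(\mathbf{Y}_{p,q,k})$ of encodings of values each within $\eta$ of $\mathbf{M}_{p,q}$ (this is the content of the condition ``$k\in V$''); by the Lipschitz property each such term, hence the whole convex combination, lies within $\eta$ of $\mathbf{M}_{p,q}$, so this branch contributes at most $\eta$. With weight $\delta_{j,p}\le\delta$, $\widetilde{\mathbf{M}}_{p,q}$ replaces $\mathbf{M}_{p,q}$ by $\sum_{k\notin V}c_k\,\|\mathbf{M}\|_{\max}G(\mathbf{Y}_{p,q,k})$, about which nothing is assumed except that clipping bounds it — and $\mathbf{M}_{p,q}$ — in magnitude by $\|\mathbf{M}\|_{\max}$, so this branch contributes at most $2\delta\|\mathbf{M}\|_{\max}$. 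Combining the two (using $\sum_{k\in V}b_k=1$, so the two mixture weights really are $1-\delta_{j,p}$ and $\delta_{j,p}$) gives $|(\mathbf{M}-\widetilde{\mathbf{M}})_{p,q}|\le\eta+2\delta\|\mathbf{M}\|_{\max}$ for every $p,q$. Finally, since $\mathbf{M}-\widetilde{\mathbf{M}}$ is Hermitian and $d$-sparse, its spectral norm is at most its maximal absolute row sum, whence $\|\mathbf{M}-\widetilde{\mathbf{M}}\|\le d(\eta+2\delta\|\mathbf{M}\|_{\max})\in O\!\bigl(d[\|\mathbf{M}\|_{\max}\delta+\eta]\bigr)$, as claimed.

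The computations are routine once the entrywise reduction is in place; the only genuine obstacle is the structural one flagged at the outset — passing to an estimate on the whole $d$-sparse difference rather than summing over one-sparse pieces. A secondary subtlety is cleanly quarantining the failure branch of amplitude estimation (total weight $\le\delta$, arbitrary but clipped outputs) from the success branch (weight $1$, outputs within $\eta$), which is precisely what the $\delta_{j,p}$, $b_k$, $c_k$ bookkeeping in the hypotheses is engineered to make transparent.
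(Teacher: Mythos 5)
Your proof is correct and follows essentially the same route as the paper's: reduce the spectral-norm claim to an entrywise ($\max$-norm) bound, split each entry into the amplitude-estimation success branch (contributing $O(\eta)$) and failure branch (weight $\le\delta$, contributing $O(\delta\|\mathbf{M}\|_{\max})$ via clipping), and pass back to the spectral norm using $d$-sparsity of the Hermitian difference. The only cosmetic divergences are that the paper proves $\|A-B\|\le(d+2)\|A-B\|_{\max}$ via Vizing's theorem where you use the slightly tighter and more elementary max-row-sum bound $\|A\|\le d\|A\|_{\max}$, and the paper phrases the success-branch estimate as a count of sign flips over $m$ rather than as approximate Lipschitz continuity of your staircase function $G$ — these are two presentations of the identical calculation, and your flagged concern (bounding the whole $d$-sparse difference rather than summing the $\kappa d^2$ one-sparse pieces) is exactly the point the paper's Vizing argument is also designed to handle.
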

\begin{proof}
We prove this lemma using the fact that if $A$ and $B$ are $d$--sparse matrices with the same sparsity pattern then $\|A-B\|\le (d+2) \|A-B\|_{\max}$.  To see this note from Vizing's theorem that at most $d+1$ colors are required to edge color a degree $d$ graph since the diagonal elements can be viewed as having self loops in the worst case scenario one additional color may be needed hence $d+2$ colors is an upper bound.  Since we can view any Hermitian matrix as a weighted adjacency matrix of a graph and since the adjacency matrix for any graph that has degree at most $1$ is one-sparse, Vizing's theorem implies that $A-B$ can be decomposed into at most $d+2$ one--sparse matrices since $A$ and $B$ are $d$--sparse and share the same sparsity  pattern.  

Next for any one sparse matrix $C$ such that $C_{x,y}\ne 0$ then the irreducible sub-matrix in the span of $\ket{x}$ and $\ket{y}$ is either of the form $\begin{bmatrix}0&C_{x,y}\\C^*_{x,y}&0\end{bmatrix}$ or $\begin{bmatrix}{C_{x,x}}\end{bmatrix}$ if $x\ne y$ and if $x=y$ respectively.  In both cases, exact diagonalization shows that the eigenvalues of $C$ within this span are $\pm |C_{x,y}|$.  It therefore follows from this observation, the triangle inequality and the fact  that Vizing's coloring uniquely assigns each matrix element (edge in the adjacency matrix) to a unique one-sparse matrix
\begin{equation}
\|A-B\| \le (d+2)\max_{j} \|A^{(j)} -B^{(j)}\| = (d+2) \max_{j} \|A^{(j)} -B^{(j)}\|_{\max}=(d+2)\|A-B\|_{\max}.\label{eq:ABbd}
\end{equation}

With~\eq{ABbd} in hand it is clear that in order to prove our result we only need to bound $\|\mathbf{M}^{(j)} - \widetilde{\mathbf{M}}^{(j)}\|_{\max}$, where $\widetilde{\mathbf{M}}^{(j)}$ is the the sub-matrix in the decomposition of $\widetilde{\mathbf{M}}$ that corresponds to ${\mathbf{M}}^{(j)}$.  This result corresponds to
\begin{align}
&\left\|\frac{1}{M} \sum_{m=1}^M\left([1-\delta_{j,p}] \sum_{k\in V} {b_k} F(j,m,p,\mathbf{Y}_{p,f(j,p),k}) + {\delta_{j,p}} \sum_{k\not \in V} {c_k} F(j,m,p,\mathbf{Y}_{p,f(j,p),k})\right)-F(j,m,p,\mathbf{M}_{p,f(p,j)}) \right\|_{\max}\nonumber\\
&\le \left\|\frac{1}{M} \sum_{m=1}^M\sum_{k\in V} {b_k} F(j,m,p,\mathbf{Y}_{p,f(j,p),k}) -F(j,m,p,\mathbf{M}_{p,f(p,j)})  \right\|_{\max}\nonumber\\
&\qquad + \delta \max_m\left\| \sum_{k\in V} {b_k} F(j,m,p,\mathbf{Y}_{p,f(j,p),k}) +  \sum_{k\not \in V} {c_k} F(j,m,p,\mathbf{Y}_{p,f(j,p),k})\right\|_{\max}\nonumber\\
&\le \left\|\frac{1}{M} \sum_{m=1}^M\sum_{k\in V} {b_k} F(j,m,p,\mathbf{Y}_{p,f(j,p),k}) -F(j,m,p,\mathbf{M}_{p,f(p,j)})  \right\|_{\max}+2\delta\nonumber\\
&\le \frac{1}{M} \sum_{m=1}^M\sum_{k\in V} {b_k} \Delta_{j,p} +2\delta=\frac{1}{M} \sum_{m=1}^M \Delta_{j,p} +2\delta:=\bar{\Delta}+2\delta.\label{eq:2deltas}
\end{align}

Next we need to bound $\bar{\Delta}$.  We achieve this by bounding the propagation of the errors in each of the $\mathbf{M}^{(j)}_{p,f(p,j)}$ calculations to errors in $ (-1)^{m(\mathbf{M}^{(j)}_{p,f(p,j)} M < m \|\mathbf{M}\|_{\max})}$.  This is not direct because of the discrete nature of the phases used which means that even if the error in the calculation is non-zero the error in the coefficients for many of the unitaries in the Hamiltonian decomposition will be zero.  However, when errors occur $\Delta_{j,p}=2$.  Let us assume that $\mathbf{M}_{p,f(p,j)}$ is computed within error $\eta$.  Thus for fixed $p$ and $j$ the maximum probability over $m$ of such an error occuring is simply the quotient of the number of values of $m$ such that $(\mathbf{M}^{(j)}_{p,f(p,j)} M < m \|\mathbf{M}\|_{\max})$ is incorrect and $M$.  Let us define this uniform bound on the failure probability to be $P$ and define the error in the calculation of $\mathbf{M}^{(j)}_{p,j}$ to be $\eta$
\begin{equation}
P(fail|p,j) \le P := \frac{2}{M}\left\lceil\frac{\eta M}{\|\mathbf{M}\|_{\max}} \right \rceil \in O\left(\frac{\eta}{\|\mathbf{M}\|_{\max}} \right).
\end{equation}

Fortunately, the net error scales with the expectation value of $\Delta_{j,p}\le 2$ obeys
\begin{equation}
\bar{\Delta}  \le 2 P(fail|p,j) \in O\left(\frac{\eta}{\|\mathbf{M}\|_{\max}} \right).\label{eq:2deltas2}
\end{equation}

The result then follows from combining~\eq{ABbd},~\eq{2deltas} and~\eq{2deltas2}.
\end{proof}

It may be natural to assume that when you use oracles that are inherently probabilistic that the resulting quantum simulation of the Hamiltonians described by such oracles would inherently be non-deterministic also.  We show below that this need not be true.  In particular, the Hamiltonian simulated by the linear combination is in fact the average Hamiltonian over the ensemble of outputs.  We prove this below.

\begin{lemma}
Under the assumptions of \lem{Mbd}, the number of queries to the approximate oracle $V_{\mathbf{M}}$ and $f$ needed to implement a non-unitary operation $Q$ with failure probability at most $\epsilon$ such that $\|Q-e^{-i\widetilde{\mathbf{M}}}\|\le \epsilon$ and  $\|\mathbf{M}- \widetilde{\mathbf{M}}\|\le (\epsilon + d\gamma)$ with probability of failure $O(\epsilon)$  is in $O(d^2 \log(d^2/\epsilon)/\log\log(d^2/\epsilon)).$\label{lem:sim1}
\end{lemma}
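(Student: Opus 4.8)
The plan is to apply the truncated Taylor-series (linear-combination-of-unitaries) simulation algorithm to the matrix $\widetilde{\mathbf{M}}$ with evolution time $t=1$, using \lem{Mbd} to identify the effective Hamiltonian and to control the error introduced by the probabilistic oracle $V_{\mathbf{M}}$. First I would set up the two oracles that the truncated-Taylor method needs. The $\mathrm{PREPARE}$ oracle is the map $\ket{0}\mapsto\frac{1}{\sqrt{\kappa d^2 M}}\sum_{j=1}^{\kappa d^2}\sum_{m=1}^{M}\ket{j}\ket{m}$; since every coefficient in the decomposition of $\mathbf{M}/\|\mathbf{M}\|_{\max}$ in \lem{Mbd} equals $1/M$, this is a pair of Hadamard-type transforms and costs no oracle queries. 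The $\mathrm{SELECT}$ oracle must apply, controlled on $\ket{j}\ket{m}$, the one-sparse unitary $\sum_{p}F(j,m,p,\mathbf{M}_{p,f(p,j)})$; I would realize it by (i) querying $f$ to obtain the paired index $f(p,j)$, (ii) querying $V_{\mathbf{M}}$ to load an estimate of $\mathbf{M}_{p,f(p,j)}$ in the form $\sqrt{1-\delta_{j,p}}\ket{\mathcal{M}}+\sqrt{\delta_{j,p}}\ket{\mathcal{M}^\perp}$, (iii) computing the sign $(-1)^{m(\,\cdot\,M<m\|\mathbf{M}\|_{\max})}$ and applying it together with the permutation $\ket{p}\mapsto\ket{f(p,j)}$ to the work register, and (iv) reversing the $f$ query (recomputing $p$ from $f(p,j)$ and $j$ where needed). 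Because $V_{\mathbf{M}}$ returns a superposition of estimates together with junk, the $\mathrm{SELECT}$ actually built is the ``noisy'' version in which $\mathbf{M}_{p,f(p,j)}$ is replaced coherently by the superposed estimates $\mathbf{Y}_{p,f(p,j),k}$, and the junk register accumulates, one per $\mathrm{SELECT}$ call.

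The central step is to observe that running the truncated-Taylor circuit with this noisy $\mathrm{SELECT}$ is exactly the situation analysed in \lem{Mbd}: the linear combination realized is the decomposition of the \emph{average} matrix $\widetilde{\mathbf{M}}$, and the junk entanglement does not spoil the LCU identity. \lem{Mbd} gives $\|\mathbf{M}-\widetilde{\mathbf{M}}\|\in O(d[\|\mathbf{M}\|_{\max}\delta+\eta])$, where $\delta$ bounds the bad-branch weight $\delta_{j,p}$ of $V_{\mathbf{M}}$ and $\eta$ bounds the precision of its good outputs. Since each entry of $\mathbf{M}$ is a median of products of differences of inner products of unit vectors, $\|\mathbf{M}\|_{\max}\in O(1)$, so taking $\eta=\gamma$ and $\delta\in O(\epsilon/d)$ gives $\|\mathbf{M}-\widetilde{\mathbf{M}}\|\le\epsilon+d\gamma$ as claimed. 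The oblivious-amplitude-amplification step of the truncated-Taylor method then yields a non-unitary operator $Q$ with $\|Q-e^{-i\widetilde{\mathbf{M}}}\|\le\epsilon$, provided the Taylor series is truncated at order $K$ and the evolution is split into $r$ segments as below; the residual amplitude on the flag qubit, together with the total bad-branch weight $O(rK\delta)$ accumulated over all $V_{\mathbf{M}}$ calls (bounded by a union bound), gives overall failure probability $O(\epsilon)$ once $\delta$ is chosen polynomially small in $\epsilon/d^2$ — which only affects the internal cost of $V_{\mathbf{M}}$ logarithmically and not the number of calls to it.

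Finally I would count queries. With $t=1$ the $\ell_1$ norm of the LCU coefficients is $\lambda=\kappa d^2\|\mathbf{M}\|_{\max}\in O(d^2)$, so the truncated-Taylor method uses $r\in O(\lambda t)=O(d^2)$ segments, truncation order $K\in O(\log(\lambda t/\epsilon)/\log\log(\lambda t/\epsilon))=O(\log(d^2/\epsilon)/\log\log(d^2/\epsilon))$, and $O(rK)$ total invocations of $\mathrm{SELECT}$ and $\mathrm{PREPARE}$. Since each $\mathrm{SELECT}$ uses $O(1)$ calls to $V_{\mathbf{M}}$ and $f$ while $\mathrm{PREPARE}$ uses none, the number of queries to $V_{\mathbf{M}}$ and $f$ is $O(d^2\log(d^2/\epsilon)/\log\log(d^2/\epsilon))$, matching the claim.

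I expect the main obstacle — already discharged by \lem{Mbd}, so that in this proof it reduces to invoking that lemma correctly — to be the genuinely probabilistic nature of $V_{\mathbf{M}}$, which leaves its estimate register entangled with the work register and makes $\mathrm{SELECT}$ an isometry rather than a unitary, so that naively no two applications ``see'' the same Hamiltonian. The content of \lem{Mbd} is precisely that averaging over the branches is benign and produces the Hermitian matrix $\widetilde{\mathbf{M}}$ close to $\mathbf{M}$, so the standard truncated-Taylor analysis goes through verbatim with $\widetilde{\mathbf{M}}$ in place of $\mathbf{M}$. A minor secondary point is that the estimates must be used symmetrically under $k\leftrightarrow\ell$ for $\widetilde{\mathbf{M}}$ to be Hermitian; this is ensured by computing each matrix element from the symmetric median expression of \defn{N_v} (or by symmetrising $V_{\mathbf{M}}$ by hand).
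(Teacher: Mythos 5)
Your overall architecture matches the paper's: use the truncated Taylor series method on $\widetilde{\mathbf{M}}$ with $t=1$, build PREPARE (cost-free uniform superposition since all weights are $1/M$) and a noisy SELECT from $V_{\mathbf{M}}$ and $f$, invoke \lem{Mbd} for the bound $\|\mathbf{M}-\widetilde{\mathbf{M}}\| \le \epsilon + d\gamma$ via $\eta = \gamma$ and $\delta\in O(\epsilon/d)$, and count $O(rK)$ SELECT calls with $r\in O(d^2)$ segments and $K\in O(\log(d^2/\epsilon)/\log\log(d^2/\epsilon))$. That bookkeeping is all sound and matches the paper.

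The gap is in what you call the ``central step.'' You assert that ``the linear combination realized is the decomposition of the average matrix $\widetilde{\mathbf{M}}$, and the junk entanglement does not spoil the LCU identity,'' and you attribute this to \lem{Mbd}. But \lem{Mbd} is a purely linear-algebraic statement: \emph{given} a Hermitian $\widetilde{\mathbf{M}}$ defined as that particular average of $F$-terms over the good and bad branches, its operator distance from $\mathbf{M}$ is $O(d[\|\mathbf{M}\|_{\max}\delta+\eta])$. It says nothing about what the LCU circuit actually implements when SELECT is an isometry that coherently loads a superposition of estimates $\ket{\mathbf{Y}_{p,f(p,j),k}}$ in an ancilla. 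In particular, after $J$ applications of this SELECT, the state carries a vector of junk registers indexed by $\vec{k}$ and the sign bits $(-1)^{y_{\vec j m p \vec k}}$ depend on $\vec{k}$; when you apply the inverse of PREPARE and project onto $\ket{0}$, you have to check that what survives is $\sum_{q\le K}(-i\widetilde{\mathbf{M}})^q/q!\ket{\psi}$ up to a quantifiable error, rather than something scrambled by the branch-dependent signs. This is precisely what the paper proves: it constructs $\mathrm{PrepareW}^{(J)}$ explicitly, writes out the tensor-product state over the $J$ calls, and shows that conditioned on success the effective operator is $(-i\widetilde{\mathbf{M}})^J$ plus an explicit error of order $[d^2\|\mathbf{M}\|_{\max}]^J(1/M+\delta^2)$, which is then controlled by the choice $M\in\Theta(d^{2J}r/\epsilon)$ and $\delta\in O(1/M)$. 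That derivation is the bulk of the lemma's proof; without it the claim $\|Q-e^{-i\widetilde{\mathbf{M}}}\|\le\epsilon$ is unsupported. Your proposal correctly names the obstacle (SELECT becomes an isometry, so ``naively no two applications see the same Hamiltonian'') but then declares it already discharged by \lem{Mbd}, which it is not. Relatedly, your union-bound estimate $O(rK\delta)$ for the accumulated bad-branch weight is not the bound that emerges from the actual calculation, which gives $d^{2J}(1/M+\delta^2)$ per segment; the conclusion that $\delta$ need only be polynomially small (and hence does not change the asymptotic query count) is still correct, but the route to it must go through the explicit PrepareW/post-selection analysis rather than a per-call union bound.
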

\begin{proof}
The simulation method used is, in essence, the trucated Taylor series simulation method in~\cite{berry2015simulating}.  Since that result assumes that the Hamiltonian oracles yield the precise matrix elements, rather than probabilistic approximations to them, we cannot directly use the results without modification.

The truncated Taylor series approximation to $e^{-i\mathbf{M} t}$ takes the form
\begin{equation}
\sum_{q=0}^K \frac{(-i)^q}{q!} [\mathbf{M}t]^q = \sum_{q=0}^K \beta_q [\mathbf{M}t]^q = \sum_{q=0}^K \beta_q \prod_{\ell =1}^q \left(\sum_{j=1}^{\kappa d^2} \mathbf{M}^{(j)}t\right).
\end{equation}
Now consider the  following oracle
\begin{equation}
O_{\mathbf{M}} \ket{j,0,p} = \sqrt{1-\delta_{j,p}} \left(\sum_{k\in V} \sqrt{b_k} \ket{\mathbf{Y}_{p,f(p,j),k}}+ \sqrt{\delta_{j,p}} \sum_{k\not\in V} \sqrt{c_k} \ket{\mathbf{Y}_{p,f(p,j),k}}\right)\ket{p},
\end{equation}
where $\mathbf{Y}$ and $f(p,j)$ are defined in~\lem{Mbd}.  This oracle can be constructed using $O(1)$ fundamental calls to the matrix element oracle for $\mathbf{M}$ and $f$.

Next using the definitions presented in \lem{Mbd} we have that $\mathbf{M}$ can be written (recalling that $\|\mathbf{M}\|_{\max}\le 1$) is
\begin{equation}
 {\mathbf{M}}= \frac{\|\mathbf{M}\|_{\max}}{M}\sum_{j=1}^{\kappa d^2}\sum_{m=1}^M \sum_{p=1}^{2^n} F(j,m,p,\mathbf{M}_{p,f(p,j)})+O\left(\frac{d^2}{M} \right).
\end{equation}
Now let  $y$ be a boolean function that evaluates to
\begin{equation}
y_{jmpk}=\mathbf{Y}_{p,f(p,j),k} M > m.\label{eq:phase}
\end{equation}
Then a quantum circuit that implements $F$ can be executed using $O(1)$ queries to the matrix elements of $\mathbf{M}$ and a polynomial-sized arithmetic circuit to compute~\eq{phase} and kick the result back into the phase.  Specifically this allows us to perform a transformation  that we denote PrepareW$^{(1)}$ 
\begin{align}
&\ket{0}\ket{\psi} =\ket{0} \sum_{p} \sqrt{\psi_p} \ket{p} \mapsto \nonumber\\
&\frac{1}{\mathcal{N}} \left(\ket{0}\ket{\psi} + \frac{1}{\sqrt{M}} \sum_{j,m,p} \sqrt{\psi_p}\ket{j.m}  \left(\sqrt{1-\delta_{j,p}}\sum_{k\in V} \sqrt{b_k} \ket{\mathbf{Y}_{p,f(p,j),k}}\ket{y_{jmpk}}+ \sqrt{\delta_{j,p}} \sum_{k\not\in V} \sqrt{c_k} \ket{\mathbf{Y}_{p,f(p,j),k}}\ket{y_{jmpk}}\right)\ket{p} \right),
\end{align}
where $\mathcal{N}$ is a normalizing constant.  This transformation takes the form of the PrepareW oracle used in the LCU lemma~\cite{berry2014exponential}.  By invoking PrepareW$^{(1)}$, $F$ as well as an appropriate phase correction to add a factor of $-i$ to the linear term and the inverse of PrepareW the LCU lemma states that, upon success, we apply
\begin{equation}
\ket{0}\ket{\psi} \mapsto \frac{\left(\openone -i \widetilde{\mathbf{M}} \right)\ket{\psi}}{\|\left(\openone -i \widetilde{\mathbf{M}} \right)\ket{\psi}\|} + O(d^2\|\mathbf{M}\|_{\max}/M),
\end{equation}
where $\tilde{\mathbf{M}}$ is given by~\lem{Mbd}.  

This illustrative example shows why when non-deterministic oracles are used within an LCU Hamiltonian simulation
Now let $f(p,\vec{j})$ be defined to be the oracle that yields the column index of the non-zero element in the $p^{\rm th}$ row of the product of several one-sparse matrices whose $j$ indices are stored in the vector $\vec{j}$.  This is given recursively via 
$$f(p,\vec{j}) = f(f(p,\vec{j}\setminus \vec{j}_{{\rm dim} (\vec{j})}),\vec{j}_{{\rm dim} (\vec{j})}).$$
For simplicity we also define $f(p,\emptyset) = p$.
Then for dim$(\vec{j})=J$,  if we call the approximate oracle $J$ times to compute the matrix element corresponding to a fixed $\vec{j}$ and $p$ we obtain 
\begin{equation}
\ket{\vec{j}}\bigotimes_{q=1}^J\left(\sqrt{1-\delta_{j_q,p}}\sum_{k\in V} \sqrt{b_k} \ket{\mathbf{Y}_{f(p,[j_1,\ldots,j_{q-1}]),f(p,[j_1,\ldots,j_q]),k}}+ \sqrt{\delta_{j_q,p}} \sum_{k\not\in V} \sqrt{c_k} \ket{\mathbf{Y}_{f(p,[j_1,\ldots,j_{q-1}]),f(p,[j_1,\ldots,j_q]),k}}\right)\ket{p}\label{eq:expansion}
\end{equation}
Now let us define the analog of $y$ for vector valued inputs $j$ and $k$ where $\vec{k}$ be the vector of $k$ values that arises from expanding~\eq{expansion} to be
\begin{equation}
y_{\vec{j}mp\vec{k}}:= M\prod_{q=1}^J \mathbf{Y}_{f(p,[j_1,\ldots,j_{q-1}]),f(p,[j_1,\ldots,j_q]),k_q}/J!\le  m.
\end{equation}
It then follows that by acting on a superposition over $j$ vectors,  we can prepare a state using $J$ queries to the approximate oracle that is, up to isometries,
\begin{equation}
\frac{1}{\sqrt{M}} \sum_{\vec{j},m,p}\!\!\left( \sqrt{\psi_p}\ket{\vec{j}.m,p}\!  \left(\prod_{q=1}^J\sqrt{1-\delta_{j_q,p}}\sum_{\vec{k}\in V^k} \sqrt{b_k} \ket{y_{\vec{j}mp\vec{k}}}+ \sum_{\ell=1}^J\sqrt{\delta_{j_\ell,p}}\prod_{q\ne \ell} \sqrt{1-\delta_{j_q,p}} \sum_{\vec{k} \setminus \vec{k}_q \in V, \vec{k}_q \not \in V} \sqrt{c_k} \ket{y_{\vec{j}mp\vec{k}}}+\cdots\right)\!\! \right)\label{eq:generalquery}
\end{equation}
Let us define this process to be PrepareW$^{(J)}$.  

From the LCU lemma and the binomial theorem, an invocation of PrepareW$^{(J)}$, $J$ iterations of the $F$ oracle (and some additional phase rotations needed to set $(-i)^J$) and the inverse of PrepareW$^{(J)}$, conditioned on a measurement of $0$ enacts
\begin{equation}
\ket{0} \ket{\psi} \mapsto (-i\widetilde{\mathbf{M}})^J \ket{\psi} + O\left({[d^{2}\|\mathbf{M}\|_{\max}]^J}\left[ \frac{1}{M} +\delta^2\right]\right).\label{eq:higherord}
\end{equation}
If we take $\delta \in O(1/M)$ then the LCU lemma and~\eq{higherord} imply that for all $J>0$ we can, through the use of an auxillary register that encodes the $J$ different values in unary, prepare the state
\begin{equation}
\ket{0}\ket{\psi} \mapsto  \frac{\ket{0}\sum_{q=0}^J (-i\widetilde{\mathbf{M}})^{q}/q!\ket{\psi}}{\|\sum_{q=0}^J (-i\widetilde{\mathbf{M}})^{q}/q!\ket{\psi}\|}+ O\left(\frac{d^{2J}}{M} \right).\label{eq:postsel}
\end{equation}
Where above we use the assumption that $\|\mathbf{M}\|_{\max} \le 1$.  This takes the same form as the simulation error for linear combination of unitary expansions that have no error.  Thus at this point we can follow the remainder of the derivation in~\cite{berry2015simulating}.

Robust oblivious amplitude amplification is used to boost the probability to nearly $100\%$, however, there are limits on the effectiveness of this technique based on the non-unitarity of the underlying operations.  By taking $J\in O(\log(d^2/\epsilon)/\log\log(d^2/\epsilon))$ we can guarantee that the error in the truncated Taylor series simulation obeys for a given number of segments $r$~\cite{berry2015simulating} $$\max_{\ket{\psi}}\left\|e^{-i\widetilde{\mathbf{M}}/r}\ket{\psi} - \frac{\sum_{q=0}^J (-i\widetilde{\mathbf{M}}/r)^{q}/q!}{\|\sum_{q=0}^J (-i\widetilde{\mathbf{M}}/r)^{q}/q!\ket{\psi}\|}\ket{\psi}\right\|\le \frac{\epsilon}{r}.$$
If we let $Q$ be the non-unitary approximation that arises from potential failures in the LCU method leading the errors in the expansion, then we obtain
\begin{equation}
\left\|e^{-i\widetilde{\mathbf{M}}/r}\ket{\psi} - Q\ket{\psi}\right\|\in O\left(\frac{\epsilon}{r} +\frac{d^{2J}}{M}  \right)\label{eq:}
\end{equation}
Thus if we use the form of robust oblivious amplitude amplification discussed in~\cite{berry2015simulating} and take $M\in \Theta(d^{2J}r/\epsilon)$ then the total non-unitary error of implementing the $r\in \Theta(d^2)$ segments of the simulation~\cite{berry2015simulating} is in $O(\epsilon)$.  The overall query complexity is $O(Jr)$ which $O(d^2\log(d^2/\epsilon)/\log\log(d^2/\epsilon))$ if simulation error $\epsilon$ is desired for a simulation of $e^{-i\widetilde{\mathbf{M}}}$.  

Similarly, we have from~\lem{Mbd}, the triangle inequality and box 4.1 from Nielsen and Chuang~\cite{NC00} that $\|e^{-i\mathbf{M}} - e^{-i\widetilde{\mathbf{M}}}\|\in O( d[\delta +\gamma])$.  Thus we require that $\delta \in O(\epsilon/d)$ to ensure that the error is $O(d\gamma +\epsilon)$.
It then follows that the error in the approximation can be made at most $\epsilon$ for $M \in O(d^{2J}/\epsilon)$,
which implies that the circuit used to prepare the $M$ register (which is a uniform superposition) is of size at most $O(J\log(d)+\log(1/\epsilon))$ qubits.   Thus the constraint that $\delta \in O(1/M)$ directly implies $\delta \in O(\epsilon/d)$ and furthermore the space and time required to prepare such a register is polynomial.
\end{proof}

This shows that we can perform a simulation using LCU methods using a non-deterministic oracle for the matrix elements of the underlying Hamiltonian.  Note the above reasoning explicitly holds for the case of simulating $e^{-i Ht}$ for sparse Hermitian $H$ by substituting $\mathbf{M} \rightarrow H t$.

\begin{theorem}\label{thm:mainPCA}
Under the assumptions of~\lem{Mbd} and given that the minimum eigenvalue gap of $\mathbf{M}$ within the support of the input state obeys $\lambda \ge \epsilon/4>0$ .  We then have the following:
\begin{enumerate}
\item The number of queries to the data set needed to perform a non-unitary operation $Q$ such that $\|e^{-i\widetilde{\mathbf{M}}}-Q\|\in O(\epsilon)$ for $\|\mathbf{M}-\widetilde{\mathbf{M}}\|\in O( \epsilon)$ and failure probability in $O(\epsilon)$ is in ${O}([d^4/\epsilon^2] \log^3(d^2/\epsilon)) $
\item The number of queries to the data set needed to ensure that the projection of an arbitrary test vector onto the principal components of  $\widetilde{\mathbf{M}}$ within a misclassification error probability of $\Lambda\in (0,4]$ is in $O([d^4/\Lambda^2 \lambda^2]\log^3(d^2/(\Lambda \lambda)))$. 
\end{enumerate}
\end{theorem}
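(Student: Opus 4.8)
The plan is to derive both parts by chaining the approximate matrix-element oracle $V_{\mathbf M}$ of \cor{covbd} into the robust linear-combination-of-unitaries simulation of \lem{sim1}, and then, for the second part, running coherent phase estimation on the resulting (non-unitary) approximation $Q$ to $e^{-i\widetilde{\mathbf M}}$.

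For the first part, the first thing I would do is fix the precision $\gamma$ of the oracle. \lem{sim1} produces $Q$ with $\|Q-e^{-i\widetilde{\mathbf M}}\|\le\epsilon$ and simultaneously $\|\mathbf M-\widetilde{\mathbf M}\|\le\epsilon+d\gamma$ using $O\!\left(d^{2}\log(d^{2}/\epsilon)/\log\log(d^{2}/\epsilon)\right)$ queries to $V_{\mathbf M}$, so taking $\gamma=\Theta(\epsilon/d)$ makes $\|\mathbf M-\widetilde{\mathbf M}\|\in O(\epsilon)$ as required. By \cor{covbd} one query to $V_{\mathbf M}$ at precision $\gamma$ and inner failure probability $\delta$ costs $O\!\left(\gamma^{-2}\log^{2}(1/\gamma)\log(\log(1/\gamma)/\delta)\right)$ queries to $U$; since the simulation makes $O\!\left(d^{2}\log(d^{2}/\epsilon)/\log\log(d^{2}/\epsilon)\right)$ such calls, a union bound lets us take $\delta\in O(\epsilon/d^{2})$ up to a logarithmic factor to keep the total failure probability in $O(\epsilon)$, which contributes only an extra logarithm to the per-call cost. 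Multiplying the two query counts, substituting $\gamma=\Theta(\epsilon/d)$, and cancelling the $1/\log\log$ factor against one power of the logarithm gives $O\!\left([d^{4}/\epsilon^{2}]\log^{3}(d^{2}/\epsilon)\right)$. The only point requiring care, and it is exactly the content of \lem{sim1}, is that $V_{\mathbf M}$ is probabilistic and leaves behind entangled junk, so the LCU analysis must be the robust version developed in \lem{Mbd} and \lem{sim1} rather than a textbook Hamiltonian-simulation statement.

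For the second part, I would project a fixed but unknown test vector $\ket{x}$ onto the eigenvectors of $\widetilde{\mathbf M}$ by running coherent phase estimation \cite{WKS15} on the operator $Q$ of part 1, whose eigenvalues lie in $[-1,1]$; resolving them to precision $O(\lambda)$, where $\lambda$ is the eigenvalue gap of $\mathbf M$ in the support of $\ket{x}$, suffices to separate the eigenspaces and to report $O(\lambda)$-approximate eigenvalues. Three error sources must be forced below $\Lambda$: the eigenvalue-estimation error of phase estimation, the simulation and non-unitarity error of $Q$ (which accumulates over the $O(1/(\Lambda\lambda))$ evolution calls made by phase estimation), and the matrix perturbation $\mathbf M\to\widetilde{\mathbf M}$. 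For the last, I would apply first-order perturbation theory for simple eigenvalues (the eigenvalue/eigenvector-derivative bound used in \eq{eigderiv0}): because the relevant eigenvalues have multiplicity one and are $\lambda$-separated, a perturbation of size $\|\mathbf M-\widetilde{\mathbf M}\|$ displaces each eigenvector, and hence each overlap $|\braket{x}{E_n}|^{2}$, by $O(\|\mathbf M-\widetilde{\mathbf M}\|/\lambda)$. Choosing $\epsilon$ in part 1 (equivalently $\gamma=\Theta(\Lambda\lambda/d)$) small enough that all three contributions are $O(\Lambda)$, then substituting the resulting value of $\epsilon$ into the part-1 bound together with the $O(1/(\Lambda\lambda))$ phase-estimation overhead and optimizing the free parameters, yields the claimed $O\!\left([d^{4}/\Lambda^{2}\lambda^{2}]\log^{3}(d^{2}/(\Lambda\lambda))\right)$.

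The main obstacle will be the error bookkeeping in the second part. One has to check that the garbage registers created by the coherent median binary search and by $V_{\mathbf M}$ do not decohere the phase-estimation register, that the non-unitary ``failure'' branch of the LCU simulation contributes only additively across the $O(1/(\Lambda\lambda))$ phase-estimation steps, and --- the genuinely delicate point --- that the first-order eigenvector-displacement estimate can be integrated over the perturbation path to a finite bound without the second-order remainder exploding, which is precisely where the simplicity and $\lambda$-separation of the relevant eigenvalues are needed. Everything else is routine propagation of the bounds in \cor{covbd} and \lem{sim1} and optimization of $\gamma$, $\delta$, and the phase-estimation precision.
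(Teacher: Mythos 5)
Your proof of Part~1 matches the paper's: multiply the per-query cost of $V_{\mathbf M}$ from \cor{covbd} by the number of calls to $V_{\mathbf M}$ made by \lem{sim1}, set $\gamma\in\Theta(\epsilon/d)$ to keep $\|\mathbf M-\widetilde{\mathbf M}\|\in O(\epsilon)$, and absorb the failure-probability and $\log\log$ factors. That is exactly the paper's derivation.

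For Part~2, however, you have folded in a phase-estimation step that does not belong to this theorem. In the paper, Part~2 of \thm{mainPCA} contains \emph{no} phase estimation: it is the perturbation-theory bound $|\bra{\phi}P_+(\epsilon)-P_+\ket{\phi}|\le 4\epsilon/\lambda$ (obtained by bounding the derivative of the projector by $2/\lambda$ and integrating), followed by the choice $\epsilon\in\Theta(\Lambda\lambda)$ substituted directly into the Part~1 count. That substitution alone gives $O\!\left([d^4/\Lambda^2\lambda^2]\log^3(d^2/(\Lambda\lambda))\right)$. The $O(1/(\Lambda\lambda))$ phase-estimation overhead you invoke is the content of the \emph{subsequent} restated result (\thm{finalPCA}), which is why that theorem has $\Lambda^3\lambda^3$ in the denominator rather than $\Lambda^2\lambda^2$. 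As written, your accounting is inconsistent: once you multiply the Part~1 bound at $\epsilon\in\Theta(\Lambda\lambda)$ by a further $O(1/(\Lambda\lambda))$, you obtain $O\!\left([d^4/\Lambda^3\lambda^3]\log^3(\cdot)\right)$, not the bound claimed by this theorem, and the vague appeal to ``optimizing the free parameters'' cannot recover the missing factor of $\Lambda\lambda$ without violating the constraint $\epsilon\in\Theta(\Lambda\lambda)$ that your own misclassification argument imposes. Drop the phase-estimation count from Part~2; your perturbation-theory argument (simple, $\lambda$-separated eigenvalues, first-order projector derivative $O(1/\lambda)$, integrate over the path) is the right ingredient and is all that is needed to justify $\epsilon\in\Theta(\Lambda\lambda)$.
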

\begin{proof}
From~\cor{covbd} we have that the number of queries needed to the training data and an oracle that gives the sparsity structure of the robust covariance matrix is in $O(\gamma^{-2} \log^2(1/\gamma) \log(\log(1/\gamma)/\delta))$.  From~\lem{sim1} the number of such queries is in $O(d^2\log(d^2/\epsilon)/\log\log(d^2/\epsilon))$, since each query to $f$ has query complexity $1$ which is asymypotically subdominant to the query complexity required to simulate $V_{\mathbf{M}}$.  Thus the total number of queries needed to the training data is for $\gamma\in O(\epsilon/d)$ 
\begin{equation}
O((d^4/\epsilon^2) \log(d^2/\epsilon)\log^2(d/\epsilon)\log(d\log(d/\epsilon)/\epsilon)/\log\log(d^2/\epsilon))\subseteq O((d^4/\epsilon^2)\log^3(d^2/\epsilon)).\label{eq:simbd}
\end{equation}  

Next we need to consider the effect of such errors in $\mathbf{M}$ on the eigenvalues of the robust covariance matrix.  In order to understand the impact of such errors we must ensure that the errors incurred in the principal components of the approximated robust density matrix $\widetilde{\mathbf{M}}$ are small.  In general, if the perturbations to the robust covariance matrix are large relative to the eigenvalue gaps then we can expect that the impact will be substantial for both the eigenvalue and eigenvector in question.  To this end, we will have to bound the impact that these perturbations will have on the result.

Let $\|\mathbf{M} -\widetilde{\mathbf{M}}\|\le \sigma$ then there exists an a matrix valued function with norm at most $1$, $\Delta(\sigma)$, such that $\mathbf{M} = \widetilde{\mathbf{M}}+\sigma\Delta(\sigma)$.  Now if $\widetilde{\mathbf{M}}$ is an analytic matrix-valued function of $\sigma$ then so is $\Delta(\sigma)$.   In this context, the path that we choose to perturb $\mathbf{M}$ along to reach $\widetilde{\mathbf{M}}$ is arbitrary.  We therefore choose it to be analytic.  Thus we can assume without loss of generality that $\Delta(\sigma)=\Delta + O(\sigma)$ for constant operator $\Delta$ such that $\|\Delta\|\le 1$.  The latter point follows by contradiction, if $\|\Delta\|>1$ then the requirement that $\|\Delta(\sigma)\|\le 1$ will fail to hold in some $\sigma$-neighborhood about $0$.

Now we will investigate the role that such errors have in the shifts in the eigenvalues and eigenvectors of $\mathbf{M}$.
Since $\Lambda>0$ we can apply non-degenerate perturbation theory to estimate the shift in the eigenvalues and eigenvectors.  If we define $E_n$ to be the $n^{\rm th}$ eigenvalue of $\mathbf{M}$ and $\ket{E_n}$ to be the corresponding eigenvector, and $E_n'$ to be the corresponding perturbed eigenvalue we have from non-degenerate perturbation theory that~\cite{ballentine2014quantum}.
\begin{equation}
E_n'(\sigma) = E_n + \bra{E_n} \Delta \ket{E_n}\sigma + O(\sigma^2).
\end{equation}
This implies that the directional derivative of the eigenvalue in the direction of the perturbation is
\begin{equation}
\partial_{\sigma} E_n(0)  := \lim_{\sigma\rightarrow 0} \frac{E_n'(\sigma)-E_n}{\sigma}= \bra{E_n} \Delta \ket{E_n}.\label{eq:eigderiv0}
\end{equation}
Note that because the result is insensitive to the initial value, the same argument can be used to bound the derivative for any value of $\sigma$.
Similarly, the derivative of the projection of the eigenstate onto any fixed vector $\ket{x}$ is
\begin{equation}
\partial_{\sigma} \braket{x}{E_n'(0)} = \sum_{p\ne n} \frac{\braket{x}{E_p}\bra{E_p} \Delta \ket{E_n}}{E_p-E_n}.
\end{equation}

Now let us define the space to be the direct sum of three disjoint spaces eigenspaces of $\mathbf{M}$ defined by the projectors $P_+$, $P_-$ and $P_?$, where $\openone =P_+ + P_- + P_?$.  Furthermore, assume that for any $\ket{\psi}$ in the $+1$ eigenspace of $P_+$ and $\ket{\phi}$ in the $+1$ eigenspace of $P_-$, $|\bra{\psi} \mathbf{M}\ket{\psi} - \bra{\phi} \mathbf{M} \ket{\phi}|\ge \lambda$.  Similarly let $P_+(\epsilon)$ be the projector onto the perturbed $+1$ eigenspace.  Under the assumption that $\mathbf{M}\in \mathbb{R}^{N\times N}$ we have that the eigenvectors can always be chosen to be real valued and hence for any $\ket{\psi}$,
\begin{align}
\partial_\sigma \bra{\psi}P_+(\sigma)\ket{\psi} &= 2\sum_{p\ne n} \sum_{n\in P_+} \braket{\psi}{E_n}\braket{E_p}{\psi}\frac{\bra{E_p} \Delta \ket{E_n}}{E_p-E_n}=2\sum_{p\ne n} \sum_{n\in P_+} \braket{\psi}{E_n}\braket{E_p}{\psi}\frac{\bra{E_p} \Delta \ket{E_n}}{E_p-E_n},\nonumber\\
&=2\sum_{p\ne n} \sum_{n\in P_+} \bra{\psi} (\ketbra{E_p}{E_p}) \Delta \left(\frac{\ketbra{E_n}{E_n}}{E_p-E_n}\right)\ket{\psi}\le 2\left\|\sum_{p\ne n} \sum_{n\in P_+}  (\ketbra{E_p}{E_p}) \Delta \left(\frac{\ketbra{E_n}{E_n}}{E_p-E_n}\right) \right\|\nonumber\\
&=2\left\|\sum_{p}   (\ketbra{E_p}{E_p}) \Delta \left(\sum_{n\in P_+} [1-\delta_{n,p}]\frac{\ketbra{E_n}{E_n}}{E_p-E_n}\right) \right\|\nonumber\\
&:=2\left\|\sum_p\ketbra{E_p}{E_p} \Delta O_p\right\|\le 2\|\openone\|\|\Delta\|\|O_p\|\le \frac{2}{\lambda}.
\end{align}
where $n\in P_+$ implies $P_+ \ket{E_n} = \ket{E_n}$ and the last line follows from $|E_p-E_n| \ge \lambda$.
Thus if the eigenvalues are computed within an error of at most $\lambda/4$ then it follows that the maximum error that we can observe in the derivative is $4/\lambda$.  Thus
\begin{equation}
|\bra{\phi} P_+(\epsilon) - P_+ \ket{\phi}| \le \int_0^\epsilon \max_{\psi}|\partial_\sigma \bra{\psi}P_+(\sigma)\ket{\psi} |\le \frac{4\epsilon}{\lambda}. 
\end{equation}
This shows that provided that there is a gap between the two subspaces for good and bad examples the probability of misclassifying the vector based on its support in either class can be made $\Lambda$ by choosing $\epsilon \in \Theta(\Lambda\lambda)$.

\end{proof}

\thm{mainPCA} shows us that the query complexity of simulating the time evolution operator within sufficient error tolerance to ensure that the perturbation to the principal component vectors is negligible is independent of the dimension of the problem.  In contrast, classical methods typically require a number of queries that scales polynomially with the dimension.  The final step needed is to consider the cost of phase estimation.  We restate this formally below,

\finalPCA*
\begin{proof}
The result trivially follows from~\thm{mainPCA} and bounds on phase estimation that show that the cost of learning eigenvalues within error $\alpha$ and probability of failure $\beta$ is in $O(1/[\alpha \beta])$~\cite{NC00}.  In this context $\alpha\in \Theta(\Lambda)$ and $\beta\in \Theta(\lambda)$ and thus the result simply follows by multiplication.

The final part of the proof is to show that the error in the robust PCA matrix, with respect to the $2$-norm, is bounded above by $\alpha L (d+2)$ under these assumptions.  The robust PCA matrix we consider is of the form $\mathbf{M}_{k,\ell}= {\rm median}([e_k^Tx_j -{\rm median}(e_k^T x_j)][e_\ell^Tx_j -{\rm median}(e_\ell^T x_j)])$ from~\defn{N_v}.  We then have from \lem{perturb} and the assumptions that $\alpha < 1/2 < 1$ and $L\alpha\le 1$ that
\begin{align}
\mathbf{M'}_{k,\ell} &\le {\rm median}([e_k^Tx_j -{\rm median}(e_k^T x_j)+\alpha L][e_\ell^Tx_j -{\rm median}(e_\ell^T x_j)+\alpha L]),\nonumber\\
&\le {\rm median}([e_k^Tx_j -{\rm median}(e_k^T x_j)][e_\ell^Tx_j -{\rm median}(e_\ell^T x_j)]+ 2\max_{k,j}(e_k^Tx_j -{\rm median}(e_k^T x_j))\alpha L +(\alpha L)^2)\nonumber\\
&\le \mathbf{M}_{k,\ell}+ 4(e_k^T x_j))\alpha L +(\alpha L)^2)\le \mathbf{M}_{k,\ell}+ 5(e_k^T x_j))\alpha L.\label{eq:medianub}
\end{align}
Then by following the same argument used in~\eq{medianub} we arrive at a similar lower bound which shows that $\|\mathbf{M}- \mathbf{M'}\|_{\max} \le 5\alpha L$.  By assumption $\mathbf{M}$ is $d$--sparse, ergo it follows from~\eq{ABbd} that
\begin{equation}
\|\mathbf{M} -\mathbf{M'}\|_2 \le 5\alpha L (d+2).
\end{equation}
It therefore follows that the adversary's maximum impact on the robust PCA matrix is limitted, even in the worst case scenario, provided that they control a small fraction of the training data.  This vindicates that our robust quantum solution is still stable given~\threatM{1} for $\alpha>0$.
\end{proof}


\end{document}